\newenvironment{lemma-repeat}[1]{\begin{trivlist}
\item[\hspace{\labelsep}{\bf\noindent Lemma~\ref{#1} }]}%
{\end{trivlist}}
\newenvironment{theorem-repeat}[1]{\begin{trivlist}
\item[\hspace{\labelsep}{\bf\noindent Theorem~\ref{#1} }]}%
{\end{trivlist}}
\newenvironment{proposition-repeat}[1]{\begin{trivlist}
\item[\hspace{\labelsep}{\bf\noindent Proposition~\ref{#1} }]}%
{\end{trivlist}}
\def\cA{{\cal A}}
\def\cF{{\cal F}}
\def\cI{{\cal I}}
\def\cK{{\cal K}}
\def\cT{{\cal T}}
\def\cV{{\cal V}}
\def\cI{{\cal I}}
\def\Gz{\cI}
\newcommand{\I}[1]{\mathit{#1}}
\newcommand{\la}{\langle}
\newcommand{\ra}{\rangle}
\newcommand{\Lcal}{\mathcal{L}}
\newcommand{\pre}{\textsf{pre}}
\newcommand{\Values}{\mathcal{V}}
\newcommand{\comment}[1]{}
\newtheorem{theorem}{Theorem}
\newtheorem{lemma}{Lemma}
\newtheorem{proposition}{Proposition}
\newtheorem{corollary}{Corollary}
\newtheorem{definition}{Definition}
\newtheorem{example}{Example}
\newcommand{\view}{{\sf view}}
\newcommand{\aview}{{\sf Aview}}
\title{A Simplicial Complex Model for Dynamic Epistemic Logic
        to study  Distributed Task Computability}
\author{\'Eric Goubault \qquad\qquad J\'er\'emy Ledent
\institute{LIX, \'Ecole Polytechnique\\ Palaiseau, France}
\email{\{goubault,jeremy.ledent\}@lix.polytechnique.fr}
\and
Sergio Rajsbaum
\institute{Instituto de Matem\'aticas, UNAM\\ Ciudad Universitaria Mexico 04510, Mexico}
\email{rajsbaum@im.unam.mx}
}
\begin{document}
\maketitle

  \begin{abstract}
The usual epistemic $\mathbf{S5_n}$ model for  a multi-agent system is based on a Kripke frame, which is a graph whose edges are labeled with  agents that do not distinguish between two states. We propose  to uncover the higher dimensional information implicit in this structure,  by considering a dual, \emph{simplicial complex model.} 
We use dynamic epistemic logic (DEL) to study how an epistemic simplicial complex model changes after a set of agents communicate with each other. We concentrate on an action model  that represents the so-called 
\emph{immediate snapshot} communication patterns
of asynchronous agents, because it is central to distributed computability (but our setting works for other 
communication patterns). 
There are topological invariants preserved from the initial epistemic complex to the one after the action model is applied, which determine the knowledge that the agents gain after  communication.
Finally, we describe how a distributed task specification can be modeled as a DEL action model, and
show that the topological invariants determine 
whether the task is solvable. We thus provide a bridge between DEL and the 
topological theory of distributed computability, which studies task solvability
in a shared memory or message passing architecture.
   \end{abstract}



\section{Introduction}

The usual epistemic logic model for a multi-agent system is based on a Kripke frame, which is a graph whose edges are labeled with  agents that do not distinguish between two states. 
A Kripke $\mathbf{S5_n}$ model represents the knowledge of the agents about a given situation.
Our first goal is to expose the topological information implicit in a Kripke model, replacing it by its dual, 
a \emph{simplicial complex} model.
We prove that these simplicial models are very closely related to the usual Kripke models:
there is an equivalence of categories between the two structures.
Thus, simplicial models retain the nice properties of Kripke models, such as
soundness and completeness w.r.t.\ (a slightly modified version of) the logic $\mathbf{S5_n}$.

To explain the interest of this duality, we extend it to a dynamic setting. We found that in this context, a very natural
setting is dynamic epistemic logic (DEL)~\cite{sep-dynamic-epistemic,DEL:2007} with  \emph{action models}~\cite{baltagMS:98}. 
We extend the duality to this setting by defining a simplicial version
of action models and a corresponding product update operator.
Thus, the product update of an initial simplicial model $\cI$ and an
action model $\cA$ yields a simplicial model~$\cI[\cA]$.
 The possible patterns of communication permitted by the action
model determine the topological invariants of $\cI$ that are preserved in $\cI[\cA]$.


We apply our framework to study fault-tolerant distributed  computability, because its intimate relation
to topology is well understood~\cite{HerlihyKR:2013}.
Also,  DEL  has applications to numerous research areas, but to the best of our knowledge it has not been
  used to study fault-tolerant distributed computing systems.
We define a particular action model of interest, the \emph{immediate snapshot}
action model, which is well known in distributed computing because it fully preserves the topology of the initial complex.
This model corresponds to wait-free asynchronous processes operating on a shared
memory, which means that the processes run at an arbitrary speed, independent
from the others, and are not
allowed to wait for events to happen in other processes.
 

Another goal is to show how DEL can be used to 
specify a distributed~\emph{task}.
A task is the equivalent of a function in distributed computability~\cite{2004dcAW}.
Agents start with an input value, and after communicating with the others,  produce an output value.
The task defines the possible inputs to the agents, and for each 
set of inputs, it specifies 
the set of outputs that the agents may produce.
An important example is the \emph{consensus} task, where all the agents must
agree on one of their input values. We use DEL in a novel way,
\begin{wrapfigure}[9]{r}{-0.4\textwidth}
 \centering
\begin{tikzpicture}
  \node (s) {$\cI[\cA]$};
  \node (xy) [below=2 of s] {$\cI[\cT]$}; 
  \node (x) [left=of xy] {$\cI$};
  \draw[<-] (x) to node [sloped, above] {$\pi_\cI$} (s);
  \draw[->, dashed, right] (s) to node {$\delta$} (xy);
  \draw[->] (xy) to node [below] {$\pi_\cI$} (x);
\end{tikzpicture}
\end{wrapfigure}
to  represent the task itself.
A  Kripke  model  $\cI$  represents the possible initial states of the  system.
The task is specified by an action model $\cT$, which describes the output values
that the agents should be able to produce, as well as preconditions specifying
which inputs are allowed to produce  which outputs.
The product update of the input model $\cI$ with $\cT$ yields an epistemic model $\cI[\cT]$
representing the knowledge that the agents should acquire to solve the task. 
Once the task is specified, given  an action
model $\cA$ that represents some distributed protocol,
the product update of $\cI$ with $\cA$ yields a  Kripke model $\cI[\cA]$
that models how agents perceive the  world after the protocol has been executed.
The protocol $\cA$ \emph{solves} the task if there exists a
morphism $\delta$ that makes the above diagram commute (Definition~\ref{thm:Kripketasksolv2}).
This intuitively happens when  there is sufficient knowledge in $\cI[\cA]$ to solve
the task.


Beyond the applications that we provide in this paper,
our main goal is to construct a general framework
that connects epistemic logic and distributed computability.
In one direction, uncovering the higher-dimensional topological structure hidden
in Kripke models allows us to transport  methods that have been
used successfully in  the algebraic topological approach to fault-tolerant
distributed computability~\cite{HerlihyKR:2013} to the realm of DEL. 
 In particular, the knowledge gained by applying an action model
is intimately related to how well it preserves the topology of the initial model.
The benefit in the other direction is in providing a formal epistemic logic
semantics to distributed task computability. This allows one to understand better
the abstract topological arguments in terms
of how much knowledge is necessary to solve a task.


 We concentrate on the specific setting of asynchronous wait-free shared read/write memory. 
However, there are known equivalences between task solvability in our model and other shared 
memory and message passing models, and this model can be used as a basis
to study task solvability in other more complex models, e.g.
where the number of processes that can crash is bounded
or even where  Byzantine failures are possible~\cite{HerlihyKR:2013}.
Nevertheless, this is far from telling the whole story. 
In the Conclusions section we discuss  many interesting avenues that remain to be explored. 
And additional technical details appear in the companion Technical Reports~\cite{ericSergioDEL1-2017,ericSergioDEL2-2017}.

\paragraph*{Related work. }
Work on knowledge and distributed systems is of course one of the inspirations
of the present work~\cite{FHMVbook},  especially where connectivity~\cite{CastanedaGM:2016,CastanedaGM:2014}
is used.
But the authors know of no previous work  using DEL~\cite{sep-dynamic-epistemic,DEL:2007} to study such systems,
and neither on directly  connecting the combinatorial
topological setting  of \cite{HerlihyKR:2013} with Kripke models. 
In \cite{Hirai}, the author proposes a variant of (non dynamic) epistemic
logic 
for a restricted form of wait-free task specification that 
cannot account for important tasks such
as consensus. 
Similar to~\cite{MosesPreC15}, we show that even though a problem
may not explicitly mention the agents' knowledge, it can  in fact be restated as  knowledge gain requirements.
Nevertheless,  we exploit the ``runs and systems" framework in an orthogonal way, and the knowledge
requirements we obtain are about inputs;
common knowledge in the case of consensus, but other forms of nested knowledge for other tasks.
In contrast,  the  \emph{knowledge of precondition principle} of~\cite{MosesPreC15}
implies that  common knowledge is a necessary condition
for performing simultaneous actions.
Our formulation of carrier maps as products
has  been partially observed in~\cite{Havlicek2000}.
There are other (categorical) connections between Kripke frames and geometry~\cite{PORTER2004235}.

DEL is often thought of as  inherently being capable of modeling
only  agents that are synchronous, but as discussed in~\cite{Degremont2011},
this is not the case. More recently,~\cite{knightMS2017} proposes 
a variant of public announcement logic for asynchronous systems that
 introduces two different modal operators for sending and receiving messages.
As we show here, DEL can naturally model the knowledge
in an asynchronous 
distributed system, at least as far as it is concerned with task solvability.
Further work is needed to study more in depth the knowledge that is
represented in this way.

\comment{The seminal work on the
subject, see e.g. \cite{sep-dynamic-epistemic}, has considered topological models. 
Future work will 
include the relationship between these topological models and 
the geometric realization of our simplicial models.
The formulation of carrier maps as products we 
develop  has  been partially observed in~\cite{Havlicek2000}.}

\section{A simplicial model for epistemic logic}
\label{sec:categories}
\comment{Kripke frames, and Kripke models can be organized as categories. The interest is that the semantics of distributed systems can be expressed using categorical operators on
Kripke frames. 
}

We describe here  the new kind of model for epistemic logic, based on 
chromatic simplicial complexes. 
\comment{
We will show that these simplicial models are very closely 
related to the usual Kripke models which are widely used to study the semantics 
of epistemic logic formulas: there is an equivalence of categories between the
two structures.
The geometric nature of simplicial complexes allows us to consider higher-dimensional topological
properties of our models, and investigate their meaning in terms of knowledge.
The idea of using simplicial complexes comes from distributed computability~\cite{HerlihyKR:2013,kozlov:2007};
}
The link with DEL and distributed computing will be developed in the next sections.

\paragraph{Syntax.}
Let $\I{AP}$ be a countable set of propositional variables and $A$ a finite set of agents.
The language $\mathcal{L}_K$ is generated by the following BNF
grammar:
\[
\varphi ::= p \mid \neg\varphi \mid (\varphi \land \varphi) \mid
K_a\varphi \qquad p \in \I{AP},\ a \in A
\]
In the following, we work with $n+1$ agents, and write $A = \{ a_0, \ldots, a_n \}$.

\paragraph{Simplicial complexes and Kripke frames.}

Given a set $V$, a \emph{simplicial} complex $C$ is a family of non-empty finite subsets of $V$ such that
for all $X \in C$, $Y \subseteq X$ implies $Y \in C$. We say $Y$ is a \emph{face} of $X$.
Elements of $V$ (identified with singletons) are called \emph{vertices}.
Elements of $C$ are \emph{simplexes}, and those which are maximal w.r.t.\ inclusion are \emph{facets}.
The set of vertices of $C$ is noted $\cV(C)$, and the set of facets $\cF(C)$.
 The \emph{dimension} of a simplex $X \in C$ is  $|X|-1$. 
 A simplicial complex $C$ is \emph{pure} if all its facets are of the same dimension, $n$. In this case,
 we say $C$ is of dimension $n$.
Given the set $A$ of agents (that we will represent as colors), a \emph{chromatic simplicial complex} $\la C, \chi \ra$ consists of a simplicial complex $C$ and a coloring map $\chi : \cV(C) \to A$, such that for all $X \in C$, all the vertices of $X$ have distinct colors.

Let $C$ and $D$ be two simplicial complexes. A \emph{simplicial map} 
$f: C \rightarrow D$ 
maps the vertices of $C$ to vertices of $D$, such that if $X$ is a simplex of $C$, $f(X)$ 
is a simplex of $D$. 
A \emph{chromatic simplicial map} between two chromatic simplicial complexes is a simplicial map that preserves colors.
Let ${\cal S}_A$ be the category of pure chromatic simplicial complexes on $A$, 
with chromatic simplicial maps for morphisms. 

\medskip

A \emph{Kripke frame} $M=\la S, \sim \ra$ over a set $A$ of agents consists of  a set
of states $S$ and a family of equivalence relations on $S$, written $\sim_a$ for every $a \in A$.
Two states $u, v \in S$ such that $u \sim_a v$ are said to be \emph{indistinguishable} by $a$.
A Kripke frame is \emph{proper} if any two states can be distinguished by at least one agent.
Let $M=\la S, \sim \ra$ and $N=\la T,\sim' \ra$ be two Kripke frames.
A \emph{morphism} from  $M$ to $N$ is a function $f$ from $S$ to $T$ such that for all $u, v \in S$, for all $a \in A$, 
$u \sim_a v$ implies $f(u) \sim'_a f(v)$. 
We write ${\cK}_A$ for the category of proper Kripke frames, with morphisms of Kripke frames as arrows.

\medskip

The following theorem states that we can canonically associate a Kripke frame with a pure chromatic simplicial complex, and vice versa. 
In fact, this correspondence extends to morphisms, and thus we have an equivalence of categories, meaning that the two structures contain the same information.

\begin{theorem}
\label{thm:equiv}
${\cal S}_A$ and ${\cal K}_A$ are equivalent categories.
\end{theorem}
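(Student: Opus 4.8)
The plan is to construct a pair of functors $F : \cS_A \to \cK_A$ and $G : \cK_A \to \cS_A$ and show they form an equivalence of categories. The natural construction exploits the chromatic structure: given a pure chromatic simplicial complex $\la C, \chi \ra$, I define $F(C)$ to be the Kripke frame whose states are the facets of $C$, with two facets $X, Y$ declared $a$-indistinguishable precisely when they share their $a$-colored vertex, i.e.\ when the unique vertex of $X$ with color $a$ equals the unique vertex of $Y$ with color $a$. Since $C$ is pure and chromatic, every facet has exactly one vertex of each color in $A$, so this is well-defined; reflexivity, symmetry and transitivity of each $\sim_a$ are immediate from equality. In the other direction, given a proper Kripke frame $\la S, \sim \ra$, I define $G(S)$ to have as vertices the equivalence classes $[u]_a$ for $u \in S$ and $a \in A$, colored by $\chi([u]_a) = a$, and as facets the sets $\{[u]_{a_0}, \ldots, [u]_{a_n}\}$ for each state $u \in S$. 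Downward closure gives the full simplicial complex; chromaticity holds because the $n+1$ vertices of a facet carry the $n+1$ distinct colors by construction.

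The key verifications come next. First I would check that $F$ and $G$ are well-defined on objects: that $F(C)$ is \emph{proper} (two distinct facets of a pure chromatic complex must differ in at least one colored vertex, so some agent distinguishes them) and that $G(S)$ is a \emph{pure} chromatic complex of dimension $n$ (each facet has exactly $n+1$ vertices of distinct colors). Second, I would check functoriality: a chromatic simplicial map $f : C \to D$ sends facets to faces preserving colors, hence induces a map on $a$-colored vertices and thus a Kripke morphism $F(f)$ respecting each $\sim_a$; conversely a Kripke morphism $g : S \to T$ respects indistinguishability, so it sends equivalence classes to equivalence classes compatibly with colors, yielding a chromatic simplicial map $G(g)$. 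These checks are routine unwindings of the definitions.

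The heart of the argument is exhibiting natural isomorphisms $GF \cong \mathrm{Id}_{\cS_A}$ and $FG \cong \mathrm{Id}_{\cK_A}$. For $FG$, a state $u$ of $S$ maps to the facet $\{[u]_{a_0}, \ldots, [u]_{a_n}\}$ of $G(S)$ and then to a state of $FG(S)$; I would argue this correspondence is a bijection on states and that $u \sim_a v$ iff the two facets share their $a$-colored vertex iff $[u]_a = [v]_a$, recovering the original relation exactly, so $FG(S) \cong S$ naturally. For $GF$, a vertex $[X]_a$ of $GF(C)$ (an $\sim_a$-class of facets of $C$) corresponds to the common $a$-colored vertex shared by all facets in that class; this gives a color-preserving vertex bijection sending facets to facets, hence a simplicial isomorphism $GF(C) \cong C$.

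The main obstacle I expect is the last isomorphism $GF(C) \cong C$, and specifically its surjectivity and injectivity on vertices. Injectivity requires that two distinct $a$-colored vertices of $C$ are never identified, which needs that each $a$-colored vertex lies in some facet (true by purity, since every simplex extends to a facet) and that distinct such vertices yield distinct $\sim_a$-classes. Surjectivity requires that every $a$-colored vertex of $C$ actually arises as the shared vertex of some facet class --- again guaranteed by purity. The subtle point is that the whole correspondence hinges on identifying a facet with the tuple of its colored vertices, which is exactly what purity plus chromaticity secure; once this dictionary is set up, naturality of both transformations follows by checking commuting squares against the functorial definitions, which I would verify but not belabor.
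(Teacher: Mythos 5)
Your proposal is correct and follows essentially the same route as the paper: the same pair of functors (facets as states, with $a$-indistinguishability meaning a shared $a$-colored vertex; equivalence classes $[u]_a$ as vertices, glued into one $n$-simplex per state) and the same two natural isomorphisms, with your direct definition of $\sim_a$ coinciding with the paper's ``generated'' relation since purity and chromaticity make it transitive outright. The one point you leave implicit---injectivity of the state-to-facet correspondence in $FG(S) \cong S$---is exactly where properness of the Kripke frame must be invoked, just as in the paper's proof.
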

\begin{proof}[Proof.]
We construct functors $F: {\cal S}_A \rightarrow {\cal K}_A$ and $G: {\cal K}_A \rightarrow {\cal S}_A$ as
follows. 

Let $C$ be a pure chromatic simplicial complex on the set of agents $A$. Its associated Kripke frame is $F(C)=\la S, \sim \ra$, where $S$ is the set of facets of $C$, and the equivalence relation $\sim_a$, for each $a \in A$, is generated by
the relations $X \sim_a Y$ (for $X$ and $Y$ facets of $C$) if $a \in \chi(X \cap Y)$.

For a morphism $f: C \rightarrow D$ in ${\cal S}_A$, we define $F(f) : F(C) \to F(D)$ that takes a facet $X$ of $C$ to its image $f(X)$, which is a facet of $D$ since $f$ is a chromatic map. Assume $X$ and $Y$ are facets of $C$ such 
that $X \sim_a Y$ in $F(C)$, that is, $a \in \chi(X \cap Y)$.
So there is a vertex $v \in \cV(C)$ such that $v \in X \cap Y$ and $\chi(v) = a$.
Then $f(v) \in f(X) \cap f(Y)$ and $\chi(f(v)) = a$, so $a \in \chi(f(X) \cap f(Y))$.
Therefore, $f(X) \sim_a f(Y)$, and $F(f)$ is a morphism of Kripke frames.

Conversely, consider a Kripke frame $M=\la S,\sim \ra$ on the set of agents $A = \{a_0, \ldots, a_n\}$.
Intuitively, what we want to do is take one $n$-simplex $\{v^s_0,\ldots,v^s_n\}$ for each $s \in S$, and glue them together according to the indistiguishability relation.
Formally, let $V = \{ v^s_i \mid s \in S, 0 \leq i \leq n \}$, and equip it with the equivalence relation $R$ defined by $v^s_i \mathrel{R} v^{s'}_i$ if and only if $s \sim_{a_i} s'$.
Then define $G(M)$ whose vertices are the equivalence classes $[v^s_i] \in V/R$, and whose simplexes are of the form $\{[v^s_0],\ldots,[v^s_n]\}$ for $s \in S$, as well as their sub-simplexes. The coloring map is given by $\chi([v^s_i])=a_i$.
It is a well-defined chromatic simplicial complex since all elements of an
equivalence class of $R$ have the same color.
The facets are exactly the $\{[v^s_0],\ldots,[v^s_n]\}$ for $s \in S$, since the Kripke frame $M$ is proper, we cannot equate two facets together.

Now let $f : M \to N$ be a morphism in ${\cal K}_A$. We define $G(f) : G(M) \to G(N)$ that maps a vertex $[v^s_i]$ of $G(M)$ to the vertex $[v^{f(s)}_i]$ of $G(N)$.
This map is well-defined (i.e., the image of a vertex does not depend on the chosen representative) because $f$ is a morphism of Kripke frames, and thus it preserves the indistinguishability relations.
It is easily checked that this is moreover a simplicial map.

Consider now a Kripke frame $M=\la S,\sim \ra$ in ${\cal K}_A$ with agent set $A$. 
$FG(M)$ is the Kripke frame $N=\la T,\sim' \ra$ such that $T$ is the set of facets of
$G(M)$. But we have seen above that the facets of $G(M)$ are of the form $\{[v^s_0],\ldots,[v^s_n]\}$ (where $s \in S$), therefore, $T$ is in bijection with
$S$. Finally, in $FG(M)$, $X \sim'_a Y$ if and only if $a \in \chi(X \cap Y)$, where
$\chi$ is the coloring, in $G(M)$, of $X$ and $Y$ which are facets in $G(M)$. But facets
in $G(M)$ are just in direct bijection with the worlds of $M$, i.e. $X=\{[v^s_0],\ldots,
[v^s_n]\}$ and $Y=\{[v^t_0],\ldots,[v^t_n]\}$ where $s, t \in M$. Note that $\chi([v^s_i])=a_i$
and $\chi([v^t_i])=a_i$ so $a \in \chi(X\cap Y)$ means that $a=a_i$ for some $i$
and $v^s_i \mathrel{R} v^t_i$. This can only be the case, by definition of $G(M)$ if $s \sim_{a_i} t$. 
This proves that $FG(M)$ and $M$ are isomorphic Kripke frames. 

Consider now a pure chromatic simplicial complex $C \in {\cal S}_A$. 
It is easily seen that $GF(C)$ is isomorphic, as a pure chromatic simplicial complex, 
to $C$, hence ${\cal S}_A$ and ${\cal K}_A$ are equivalent categories. 
\end{proof}

\begin{example}
The picture below shows a Kripke frame (left) and its associated chromatic
simplicial complex (right). The three agents, named $b, g, w$, are represented as
colors black, grey and white on the vertices of the simplicial complex. The three worlds of the
Kripke frame correspond to the three triangles (i.e., $2$-dimensional simplexes)
of the simplicial complex. The two worlds indistinguishable by agent
$b$, are glued along their black vertex; the two worlds indistinguishable by agents $g$ and $w$ are glued along the grey-and-white edge.

\begin{center}
\begin{tikzpicture}[auto,dot/.style={draw,circle,fill=black,inner sep=0pt,minimum size=3pt},cloudgrey/.style={draw=black,thick,circle,fill={rgb:black,1;white,2},inner sep=0pt,minimum size=8pt},cloud/.style={draw=black,thick,circle,fill=white,inner sep=0pt,minimum size=8pt}, cloudblack/.style={draw=black,thick,circle,fill=black,inner sep=0pt,minimum size=8pt}]
\node[dot] (p) at (-1,0) {};
\node[dot] (q) at (0,0) {};
\node[dot] (r) at (1,0) {};
\draw (p) -- node[above] {$g,w$} (q);
\draw (q) -- node[above] {$b$} (r);
 
\path[->, bend right, >=stealth] (3.5,0.2)  edge node[above] {$F$} (1.5,0.2);
\path[->, bend right, >=stealth] (1.5,-0.2)  edge node[below] {$G$} (3.5,-0.2);

\draw[thick, draw=black, fill=blue, fill opacity=0.15]
  (4,0) -- (5,-0.577) -- (5,0.577) -- cycle;
\draw[thick, draw=black, fill=blue, fill opacity=0.15]
  (5,-0.577) -- (5,0.577) -- (6,0) -- cycle;
\draw[thick, draw=black, fill=blue, fill opacity=0.15]
  (6,0) -- (7,-0.577) -- (7,0.577) -- cycle;
\node[cloudblack] (b1) at (4,0) {};
\node[cloudgrey] (g1) at (5,-0.577) {};
\node[cloud] (w1) at (5,0.577) {};
\node[cloudblack] (b2) at (6,0) {};
\node[cloudgrey] (g2) at (7,-0.577) {};
\node[cloud] (w2) at (7,0.577) {};
\end{tikzpicture}
\end{center}
\end{example}

We now decorate our simplicial complexes with atomic propositions in order to 
get a notion of simplicial model.

\paragraph{Simplicial models and Kripke models.}
For technical reasons, we restrict to models where all the atomic propositions
are saying something about some local value held by one particular agent. All the examples 
that we are interested in will fit in that framework. Let $\Values$ be some 
countable set of values, and $\I{AP} = \{ p_{a,x} \mid a \in A, x \in \Values \}$ 
be the set of \emph{atomic propositions}. Intuitively, $p_{a,x}$ is true if agent 
$a$ holds the value $x$.
We write $\I{AP}_a$ for the atomic propositions concerning agent $a$.


A \emph{simplicial model} $M = \la C, \chi, \ell \ra$ consists of a pure chromatic simplicial 
complex $\la C, \chi \ra$ of dimension $n$, and a labeling $\ell : \cV(C) \to \mathscr{P}(\I{AP})$
that associates with each vertex $v \in \cV(C)$ a set of atomic propositions concerning agent $\chi(v)$, i.e., such that $\ell(v) \subseteq \I{AP}_{\chi(v)}$.
Given a facet $X = \{v_0, \ldots, v_n\} \in C$, we write $\ell(X) = \bigcup_{i=0}^n \ell(v_i)$.
A \emph{morphism} of simplicial models $f : M \to M'$ is a chromatic simplicial map that preserves the labeling: $\ell'(f(v)) = \ell(v)$ (and $\chi$).
We denote by ${\cal SM}_{A,\I{AP}}$ the category of simplicial models over the set of 
agents $A$ and atomic propositions $\I{AP}$.

\medskip

A \emph{Kripke model}  $M = {\la S,\sim,L\ra}$ consists of a Kripke frame
 $\la S, \sim \ra$  and a function $L : S \to \mathscr{P}(\I{AP})$.
Intuitively, $L(s)$ is the set of atomic propositions that are true in the state $s$.
%
%
A Kripke model is \emph{proper} if the underlying Kripke frame is proper.
%
A Kripke model is \emph{local} if for every agent $a \in A$, $s \sim_a s'$ implies
$L(s) \cap \I{AP}_a = L(s') \cap \I{AP}_a$, i.e., an agent always 
knows its own values.

Let $M=\la S, \sim,L \ra$ and $M'=\la S',\sim',L' \ra$ be two Kripke
models on the same set $\I{AP}$. A \emph{morphism of Kripke models} $f : M \to M'$
is a morphism of the underlying Kripke frames such that 
$L'(f(s)) =  L(s)$ for every state $s$ in $S$. 
We write $\cal KM_{A, \I{AP}}$ for the category of local proper Kripke models. 

\medskip
We can now extend the two maps $F$ and $G$ of Theorem~\ref{thm:equiv} to an equivalence between simplicial models and Kripke models.

\begin{theorem}
\label{thm:equiv-model}
${\cal SM}_{A,\I{AP}}$ and ${\cal KM}_{A,\I{AP}}$ are equivalent categories.
\end{theorem}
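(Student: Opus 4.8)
The plan is to bootstrap off Theorem~\ref{thm:equiv}: we already have an equivalence $F \colon {\cal S}_A \to {\cal K}_A$ and $G \colon {\cal K}_A \to {\cal S}_A$ between the underlying (unlabeled) structures, so it suffices to equip these functors with a compatible handling of the labelings and to check that the natural isomorphisms $FG \cong \mathrm{Id}$ and $GF \cong \mathrm{Id}$ survive the extra data. On objects, I would extend $F$ by keeping $F\la C, \chi, \ell\ra = \la S, \sim, L\ra$ with $S = \cF(C)$ and $\sim$ as before, and defining the valuation on a facet $X$ by $L(X) = \ell(X) = \bigcup_{v \in X} \ell(v)$. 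Symmetrically, I would extend $G$ by keeping the complex $G\la S, \sim\ra$ from Theorem~\ref{thm:equiv} and labeling each vertex by $\ell([v^s_i]) = L(s) \cap \I{AP}_{a_i}$, i.e.\ the restriction of the valuation at $s$ to the propositions owned by agent $a_i$.

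The first thing to verify is that these assignments land in the correct categories. For $F$, the frame $F(C)$ is already proper, and the model $F(M)$ is automatically local: if $X \sim_a Y$ then $a \in \chi(X \cap Y)$, so $X$ and $Y$ share their unique $a$-colored vertex $v$, and since labels respect colors we get $L(X) \cap \I{AP}_a = \ell(v) = L(Y) \cap \I{AP}_a$. For $G$, the essential point --- and the reason the target category is restricted to \emph{local} models --- is that $\ell([v^s_i])$ must not depend on the chosen representative: if $[v^s_i] = [v^{s'}_i]$ then $s \sim_{a_i} s'$ by construction of $G$, and locality of $M$ gives exactly $L(s) \cap \I{AP}_{a_i} = L(s') \cap \I{AP}_{a_i}$, so $\ell$ is well defined; moreover $\ell([v^s_i]) \subseteq \I{AP}_{a_i}$ by definition, as required of a simplicial model. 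Preservation of morphisms is then routine: a morphism $f$ of simplicial models satisfies $\ell'(f(v)) = \ell(v)$ vertexwise, so $F(f)$ preserves $L$ facet by facet; and a morphism $f$ of Kripke models satisfies $L'(f(s)) = L(s)$, so $G(f)$ preserves the restricted labels.

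Finally, I would lift the two natural isomorphisms of Theorem~\ref{thm:equiv} to the labeled setting, and this is where the one genuinely new computation occurs. The key fact is that $\I{AP} = \bigsqcup_{a \in A} \I{AP}_a$ is partitioned by agent. For $FG(M) \cong M$, the facet corresponding to a state $s$ is $\{[v^s_0], \ldots, [v^s_n]\}$, whose label is $\bigcup_{i} \ell([v^s_i]) = \bigcup_i (L(s) \cap \I{AP}_{a_i}) = L(s)$, so the isomorphism respects valuations. For $GF(C) \cong C$, a facet $X$ of $C$ gets label $\ell(X)$ in $F(C)$, and the vertex of $GF(C)$ of color $a_i$ sitting over $X$ receives $\ell(X) \cap \I{AP}_{a_i}$; since the color-$a_i$ vertex $v$ of $X$ contributes $\ell(v) \subseteq \I{AP}_{a_i}$ while all other vertices contribute propositions of disjoint colors, this equals $\ell(v)$, matching the original label under the identification from Theorem~\ref{thm:equiv}. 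The main obstacle, such as it is, is not difficulty but bookkeeping: one must keep the color-indexing of propositions and vertices aligned throughout, and recognize that the locality hypothesis is precisely what makes $G$'s labeling well defined --- without it the equivalence would fail.
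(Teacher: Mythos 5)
Your proposal is correct and follows essentially the same route as the paper's proof: extend $F$ and $G$ from Theorem~\ref{thm:equiv} with the labelings $L(X) = \bigcup_{v\in X}\ell(v)$ and $\ell([v^s_i]) = L(s)\cap \I{AP}_{a_i}$, use locality exactly where you do (well-definedness of $G$'s labeling, and automatic locality of $F(M)$), and lift the natural isomorphisms. The only difference is that you spell out the label computations for $FG(M)\cong M$ and $GF(C)\cong C$ via the partition $\I{AP}=\bigsqcup_a \I{AP}_a$, which the paper leaves implicit (``works the same as in the previous theorem''); this is a welcome bit of added rigor, not a different argument.
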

\begin{proof}[Proof.]
We describe the functors $F : {\cal SM} \to {\cal KM}$ and $G : {\cal KM} \to {\cal SM}$. On the underlying Kripke frame and simplicial complex, they act the same as in the proof of Theorem~\ref{thm:equiv}.

Given a simplicial model $M = \la C, \chi, \ell \ra$, we associate the Kripke model $F(M) = \la \cF(C), 
\sim, L \ra$ where the labeling $L$ of a facet $X \in \cF(C)$ is given by $L(X) = \bigcup_{v \in X} \ell(v)$. This Kripke model is local since $X \sim_a Y$ means that $X$ and $Y$ share an $a$-colored vertex $v$, so $L(X) \cap \I{AP}_a = L(Y) \cap \I{AP}_a = \ell(v)$.

Conversely, given a Kripke model $M = \la S, \sim, L \ra$, the underlying 
simplicial complex of $G(M)$ is obtained by gluing together $n$-simplexes of the 
form $\{v^s_0,\ldots,v^s_n\}$, with $s \in S$. We label the vertex $v^s_i$ 
(colored by $a_i$) by $\ell(v^s_i) = L(s) \cap \I{AP}_{a_i}$. This is well 
defined because two vertices $v^s_i$ and $v^{s'}_i$ are identified whenever $s \sim_{a_i} s'$,
so $L(s) \cap \I{AP}_{a_i} = L(s') \cap \I{AP}_{a_i}$ since $M$ is local.

The action of $F$ and $G$ on morphisms is the same as in Theorem~\ref{thm:equiv}.
It is easy to check that the additional properties of morphisms between models 
are verified. Checking that $FG(M) \simeq M$ and $GF(M) \simeq M$ also works the 
same as in the previous theorem.
\end{proof}

\begin{example}\label{ex:binInputs}
The figure below shows the so-called binary input complex and its associated Kripke model, for 2 and 3 agents.
Each agent gets a binary value $0$ or $1$, but doesn't know which value has been received by the other agents. So, every possible combination of $0$'s and $1$'s is a possible world.

In the Kripke model, the agents are called $b, g, w$, and the labeling $L$ of the
possible worlds is represented as a sequence of values, e.g., $101$,
representing the values chosen by the agents $b, g, w$ (in that order).
In the 3-agents case, the labels of the dotted edges have been omitted to avoid
overloading the picture, as well as all the edges labeled by only one agent.

In the simplicial model, agents are represented as colors (black, grey, and white). The labeling $\ell$ is represented as a single value in a vertex, e.g.,
``$1$'' in a grey vertex means that agent $g$ has chosen value~$1$.
The possible worlds correspond to edges in the 2-agents case, and triangles in the 3-agents case.

\medskip
\noindent
\begin{minipage}{0.4\textwidth}
\begin{center}
\begin{tikzpicture}[auto,line/.style={draw,thick,-latex',shorten >=2pt},cloudgrey/.style={draw=black,thick,circle,fill={rgb:black,1;white,3},minimum height=1em},cloud/.style={draw=black,thick,circle,fill=white,minimum height=1em}]
\matrix[column sep=2mm,row sep=2mm]
{
& \node (alpha'') {$01$}; & & \node [cloudgrey] (u) {$0$}; & \node[circle] { }; & \node [cloud] (v) {$1$}; \\
\node (beta'') {$00$}; & & \node (gamma'') {$11$}; & & & \\
& \node (delta'') {$10$}; & & \node [cloud] (w) {$0$}; & & \node [cloudgrey] (z) {$1$}; \\
};
\draw (alpha'') -- node[above left,xshift=2pt,yshift=-2pt] {$\scriptstyle g$} (beta'');
\draw (beta'') -- node[below left,xshift=2pt,yshift=2pt] {$\scriptstyle w$} (delta'');
\draw (gamma'') -- node[xshift=-2pt,yshift=2pt] {$\scriptstyle g$} (delta'');
\draw (alpha'') -- node[xshift=-2pt,yshift=-2pt] {$\scriptstyle w$} (gamma'');
\draw[semithick] (u) -- (v) -- (z) -- (w) -- (u);
\end{tikzpicture}
\end{center}
\end{minipage}
\hfill
\begin{minipage}{0.5\textwidth}
\begin{center}
\begin{tikzpicture}[auto,rotate=45,font=\scriptsize]
\node (a) at (0,0) {$111$};
\node (b) at (1.5,0) {$011$};
\node (c) at (0,1.5) {$101$};
\node (d) at (1.5,1.5) {$001$};
\node (e) at (0.6,0.6) {$110$};
\node (f) at (2.1,0.6) {$010$};
\node (g) at (0.6,2.1) {$100$};
\node (h) at (2.1,2.1) {$000$};
\path[inner sep = 0pt]
      (a) edge node[below right] {$gw$} (b)
          edge node {$bw$} (c)
          edge[dotted] (e) 
      (d) edge node[above left,pos=0.4,xshift=2pt] {$gw$} (c)
          edge node[pos=0.35] {$bw$} (b)
          edge node[right,xshift=1pt,pos=0.3] {$bg$} (h)
      (f) edge node[above right] {$bw$} (h)
          edge node[xshift=1pt] {$bg$} (b)
          edge[dotted] (e) 
      (g) edge node {$gw$} (h)
          edge node[left,xshift=-1pt] {$bg$} (c)
          edge[dotted] (e); 
\end{tikzpicture}
\raisebox{1em}{\includegraphics[scale=0.25]{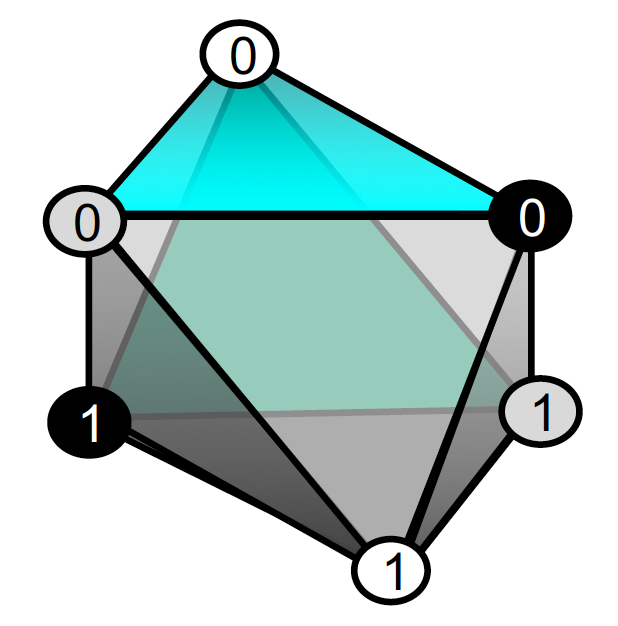}}
\end{center}
\end{minipage}

\medskip
\noindent
It is well known in the context of distributed computing~\cite{HerlihyKR:2013} that the binary input simplicial complex for $n+1$ agents is a $n$-dimensional sphere.
\end{example}

\begin{example}\label{ex:torus}
Consider the following situation. There are three agents black, grey and white,
and a deck of four cards, $\{0, 1, 2, 3\}$. One card is given to each agent, 
and the last card is kept hidden. Each agent knows its own card, but
not the other agents' cards.
The simplicial model corresponding to that situation is depicted
below on the left. The color of vertices indicate the corresponding agent, and the labeling is its card.
In the planar drawing, vertices that appear several times with the same color and
value should be identified. The arrows $A$ and $B$ indicate how the edges should
be glued together. What we obtain is a triangulated torus.

If the deck of cards is $\{0, 1, 2\}$,
we get the figure on the right, where the two white vertices (with card $0$)
should be identified, as well as the two black ones.
\begin{center}
\includegraphics[scale=0.25]{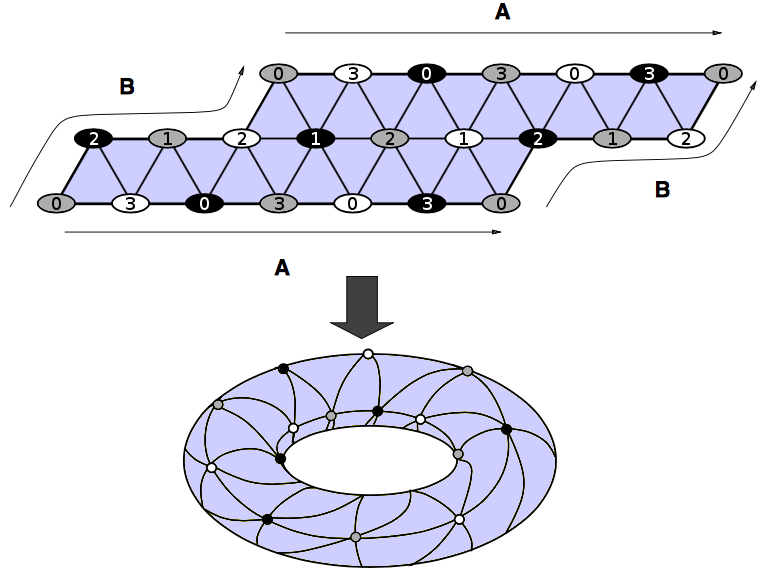}
\qquad
\includegraphics[scale=0.20]{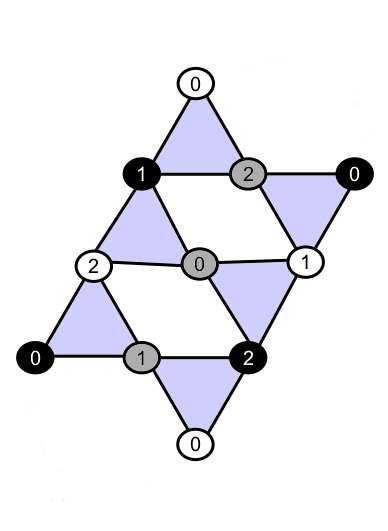}
\label{fig:torus}
\end{center}
\end{example}

\medskip
Thus, Theorem~\ref{thm:equiv-model} says that simplicial models are closely related to Kripke models. Keeping that 
translation in mind, we can reformulate the usual semantics of formulas in Kripke 
models, in terms of simplicial models.

\begin{definition}\label{semantFormulas}
We define the truth of a formula $\varphi$ in some epistemic state $(M,X)$ with $M=\la C, \chi, \ell \ra$ a simplicial model, $X \in \cF(C)$ a facet of $C$ and
$\varphi \in \Lcal_K(A,\I{AP})$.
The satisfaction relation, determining when a formula is true in an
epistemic state, is defined as:

\begin{tabular}{lrl}
$M,X \models p$ & iff & $p \in \ell(X)$\\
$M,X \models \neg \varphi$ & iff & $M,X \not\models \varphi$\\
$M,X \models \varphi \wedge \psi$ & iff & $M,X \models \varphi
\mbox{ and } M,X \models \psi$\\
$M,X \models K_a \varphi$ & iff & $\mbox{for all } Y \in \cF(C), a \in \chi(X \cap Y) \mbox{ implies } M,Y \models \varphi$\\
\end{tabular}
\end{definition}

We can show that this definition of truth agrees with the usual one (which we write $\models_\cK$ to avoid confusion) on the corresponding Kripke model.

\begin{proposition} \label{prop:truth}
Given a simplicial model $M$ and a facet $X$, $M,X \models \varphi$ iff $F(M), X \models_\cK \varphi$. Conversely, given a local proper Kripke model $N$ and state $s$, $N,s \models_\cK \varphi$ iff $G(N),G(s) \models \varphi$, where $G(s)$ is the facet $\{v_0^s, \ldots, v_n^s\}$ of $G(N)$.
\end{proposition}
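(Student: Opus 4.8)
The plan is to prove both directions by structural induction on the formula $\varphi$, since the satisfaction relation in Definition~\ref{semantFormulas} is defined recursively on the syntax of $\Lcal_K$. The two statements are essentially mirror images of each other under the equivalence $F \dashv G$ of Theorem~\ref{thm:equiv-model}, so I expect the bulk of the work to be in one direction, with the converse following by the same argument read through $G$ (using that $FG(N) \simeq N$ and $GF(M) \simeq M$).

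First I would set up the induction on the first claim, $M,X \models \varphi$ iff $F(M),X \models_\cK \varphi$, proving it simultaneously for all facets $X \in \cF(C)$. The base case $\varphi = p$ is where the definitions meet: by Definition~\ref{semantFormulas}, $M,X \models p$ iff $p \in \ell(X)$, and by construction of $F$ in the proof of Theorem~\ref{thm:equiv-model} the labeling of the facet $X$ in $F(M)$ is $L(X) = \bigcup_{v \in X} \ell(v) = \ell(X)$, so $F(M),X \models_\cK p$ iff $p \in L(X)$ iff $p \in \ell(X)$. These coincide immediately. The Boolean cases $\neg\varphi$ and $\varphi \wedge \psi$ are routine: the simplicial and Kripke semantics treat negation and conjunction identically, so they reduce directly to the induction hypothesis.

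The key case is the modality $K_a\varphi$, and this is the main obstacle, since it is where the two notions of indistinguishability must be reconciled. In the simplicial semantics $M,X \models K_a\varphi$ quantifies over facets $Y$ with $a \in \chi(X \cap Y)$, whereas the Kripke semantics quantifies over states $Y'$ with $X \sim_a Y'$. The whole point is that these two quantification ranges agree: recall from the proof of Theorem~\ref{thm:equiv} that the states of $F(M)$ are precisely the facets of $C$, and the relation $\sim_a$ is defined so that $X \sim_a Y$ holds exactly when $a \in \chi(X \cap Y)$. I would need to be slightly careful here, because $\sim_a$ in $F(C)$ is the equivalence relation \emph{generated} by the relation $a \in \chi(X \cap Y)$, whereas the simplicial clause uses the generating relation directly; but since every pair of $a$-colored vertices identifying two facets already makes them directly related, and the simplicial complex is chromatic (each facet has exactly one $a$-colored vertex), the generated relation coincides with the generating one on facets. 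Granting this, the quantifier ranges match and the induction hypothesis applied to each such $Y$ closes the case.

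For the converse, $N,s \models_\cK \varphi$ iff $G(N),G(s) \models \varphi$, I would observe that $G(s)$ is by definition the facet $\{v_0^s,\ldots,v_n^s\}$ of $G(N)$, and that the labeling satisfies $\ell(v_i^s) = L(s) \cap \I{AP}_{a_i}$, whence $\ell(G(s)) = L(s)$ because $N$ is local; this gives the atomic base case. Rather than redo the full induction, I would deduce the converse from the first statement: applying the first claim to the simplicial model $M = G(N)$ gives $G(N),G(s) \models \varphi$ iff $FG(N),G(s) \models_\cK \varphi$, and since $FG(N) \simeq N$ as Kripke models (with $G(s)$ corresponding to $s$ under that isomorphism, and isomorphisms preserving $\models_\cK$), this is equivalent to $N,s \models_\cK \varphi$. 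Thus both halves of the proposition follow from the single structural induction together with the isomorphisms established in Theorem~\ref{thm:equiv-model}.
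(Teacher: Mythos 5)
Your proposal is correct and takes the same route as the paper, whose entire proof is ``straightforward by induction on the formula $\varphi$''; your write-up simply fills in the details that the paper leaves implicit, including the one genuinely subtle point, namely that the equivalence relation $\sim_a$ of $F(M)$ coincides with its generating relation $a \in \chi(X \cap Y)$ because each facet of a pure chromatic complex has exactly one $a$-colored vertex (which yields transitivity). Deriving the converse from the first claim via $FG(N) \simeq N$ rather than a second induction is a harmless streamlining, legitimate since the proposition assumes $N$ is local and proper, so the isomorphism of Theorem~\ref{thm:equiv-model} indeed matches $G(s)$ with $s$ and preserves labelings.
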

\begin{proof}
This is straightforward by induction on the formula $\varphi$.
\end{proof}

It is well-known that the axiom system $\mathbf{S5_n}$ is sound and complete with respect to the class of Kripke models~\cite{DEL:2007}. 
Since we restrict here to local Kripke models, we need to add the following axiom (or axiom schema, if $\Values$ is infinite), saying that every agent knows which values it holds:

\[\mathbf{Loc}\; = \bigwedge_{a\in A, x\in\Values} K_a(p_{a,x}) \lor K_a(\neg p_{a,x})\]

\begin{corollary}
The axiom system $\mathbf{S5_n}+\mathbf{Loc}$ is sound and complete w.r.t.\ the class of simplicial models.
\end{corollary}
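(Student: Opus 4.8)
The plan is to prove soundness by a direct verification on simplicial models, and to obtain completeness by transporting the known completeness of $\mathbf{S5_n}$ across the equivalence of Theorem~\ref{thm:equiv-model}, using Proposition~\ref{prop:truth} to match satisfaction on both sides. The only genuinely new ingredients are the handling of the extra axiom $\mathbf{Loc}$ and of the two side conditions (\emph{local} and \emph{proper}) that cut down the class of Kripke models.

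For soundness I would argue that, with respect to Definition~\ref{semantFormulas}, the $\mathbf{S5}$ axioms (K, T, 4, 5) are valid and the rules (modus ponens, necessitation) preserve validity, provided the relation used in the semantics of $K_a$ is an equivalence relation on facets. So the first step is to check that $X \sim_a Y \iff a \in \chi(X \cap Y)$ is reflexive (in a pure $n$-complex every facet has a unique $a$-coloured vertex), symmetric (immediate), and transitive (if $X,Y$ and $Y,Z$ each share their $a$-coloured vertex, these are the same vertex of $Y$, hence it lies in $X \cap Z$). For $\mathbf{Loc}$ I would observe that the $a$-labelled part of a facet is carried entirely by its unique $a$-coloured vertex, which is common to all facets $a$-indistinguishable from $X$; hence for each value $x$ the truth value of $p_{a,x}$ is constant across the $a$-class of $X$, which is exactly what $K_a(p_{a,x}) \lor K_a(\neg p_{a,x})$ asserts.

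For completeness, Proposition~\ref{prop:truth} gives $M,X \models \varphi$ iff $F(M),X \models_\cK \varphi$, and $N,s \models_\cK \varphi$ iff $G(N),G(s) \models \varphi$; since $F$ and $G$ are mutually inverse up to isomorphism between simplicial models and local proper Kripke models (Theorem~\ref{thm:equiv-model}), a formula is valid in all simplicial models iff it is valid in all local proper Kripke models. It therefore suffices to prove that $\mathbf{S5_n}+\mathbf{Loc}$ is complete with respect to that class of Kripke models and then push a counter-model through $G$. For this I would build the canonical model of $\mathbf{S5_n}+\mathbf{Loc}$, whose worlds are the maximal consistent sets and whose relations are the usual $S5$ canonical relations; these are equivalence relations by the standard T/4/5 argument. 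The key point is that this canonical model already lies in the correct subclass. It is \emph{local} because every maximal consistent set contains all instances of $\mathbf{Loc}$, so whenever $s \sim_a s'$ the axiom together with T forces $p_{a,x} \in s \iff p_{a,x} \in s'$; and it is \emph{proper} because two distinct worlds related by $\sim_a$ for every $a$ would agree on all modal formulas (by the definition of the canonical $\sim_a$) and on all atomic propositions (by $\mathbf{Loc}$, using that every atom is of the form $p_{a,x}$), hence would coincide. The truth lemma then yields, for any $\mathbf{S5_n}+\mathbf{Loc}$-consistent $\varphi$, a local proper Kripke model satisfying it; applying $G$ and Proposition~\ref{prop:truth} produces a simplicial model of $\varphi$, which gives completeness by contraposition.

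The delicate step, and the main obstacle, is verifying that the canonical model is proper. This is exactly where the standing restriction that every atomic proposition concerns some agent, $\I{AP} = \bigcup_{a} \I{AP}_a$, is essential: it lets $\mathbf{Loc}$ recover the atomic theory of a world from its modal theory, collapsing any two worlds that every agent confuses. Without this hypothesis the canonical model could contain distinct worlds indistinguishable to all agents, which $G$ cannot represent since the facets of the resulting simplicial complex are necessarily distinct. Everything else — that the canonical $S5$ relation is an equivalence and the truth lemma itself — is standard and can simply be cited.
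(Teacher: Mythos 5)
Your proof is correct and takes essentially the same route as the paper: both reduce the statement to soundness and completeness of $\mathbf{S5_n}+\mathbf{Loc}$ over \emph{local proper} Kripke models and then transport it to simplicial models across the equivalence of Theorem~\ref{thm:equiv-model} via Proposition~\ref{prop:truth}. The only difference is one of detail, not of approach: where the paper defers the Kripke-side step to ``adapting the proof of~\cite{DEL:2007}'', you carry it out explicitly --- direct soundness on simplicial models, and a canonical-model argument in which $\mathbf{Loc}$ (together with the assumption that every atom is some $p_{a,x}$) is exactly what forces the canonical model to be local and proper.
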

\begin{proof}
Adapting the proof of~\cite{DEL:2007} for $\mathbf{S5_n}$, it can be shown that
$\mathbf{S5_n}+\mathbf{Loc}$ is sound and complete w.r.t.\ the class of local
proper Kripke models. 
Then, we transpose it to simplicial models using
Proposition~\ref{prop:truth}.

Indeed, suppose a formula $\varphi$ is true for every local proper Kripke model and any state.
Then given a simplicial model and facet $(M,X)$, since by assumption $F(M), X \models_\cK \varphi$, we also have $M,X \models \varphi$ by Proposition~\ref{prop:truth}.
So $\varphi$ is true in every simplicial model.
Similarly, the converse also holds.
\end{proof}

The following Theorem shows that morphisms of simplicial models cannot ``gain knowledge about the world''. This will be useful in Section~\ref{sec:modeKripkeF} when we formulate the solvability of a task as the existence of some morphism.

\begin{theorem}[knowledge gain] 
\label{thm:lose-knowledge}
Consider simplicial models $M=\la C, \chi, \ell \ra$ and
$M' = \la C', \chi', \ell' \ra$, and a  morphism $f : M \to M'$.
Let $X \in \cF(C)$ be a facet of $M$, $a$ an agent, and $\varphi$ a formula 
which does not contain negations except, possibly, in front of atomic propositions.
Then, $M',f(X) \models \varphi$ implies $M,X \models \varphi$.
\end{theorem}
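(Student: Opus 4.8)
The plan is to proceed by structural induction on the formula $\varphi$, exploiting the restricted syntax: since negations occur only in front of atomic propositions, the positive fragment is built from literals using $\land$, the modalities $K_a$, and (implicitly, via the restriction) we must track which connectives can appear. First I would clarify the grammar of the allowed formulas: they are generated by $\varphi ::= p \mid \neg p \mid (\varphi \land \varphi) \mid K_a \varphi$, and to make the induction go through I expect I will also need to handle disjunction $\varphi \lor \psi$, since a positive fragment typically includes it. For each case the claim to establish is the implication $M', f(X) \models \varphi \Rightarrow M, X \models \varphi$, and the key structural fact driving every case is that $f$ is a morphism of simplicial models, hence it is \emph{color-preserving} and \emph{label-preserving}: $\ell'(f(v)) = \ell(v)$ and $\chi'(f(v)) = \chi(v)$.

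The base cases are the literals. For an atomic proposition $p$, I would argue that $p \in \ell'(f(X))$ implies $p \in \ell(X)$, using that $\ell(f(X)) = \bigcup_{v \in X} \ell'(f(v)) = \bigcup_{v \in X} \ell(v) = \ell(X)$; in fact label-preservation gives equality $\ell'(f(X)) = \ell(X)$, so $p \in \ell'(f(X)) \iff p \in \ell(X)$, handling both $p$ and $\neg p$ cleanly. The conjunction case is immediate from the induction hypothesis applied to each conjunct, and disjunction (if present) is equally routine: $M', f(X) \models \varphi \lor \psi$ means one disjunct holds at $f(X)$, and the corresponding induction hypothesis transfers it to $X$.

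The modal case $K_a \varphi$ is where the real content lies, and I expect it to be the main obstacle. Here I must show $M', f(X) \models K_a \varphi$ implies $M, X \models K_a \varphi$. Unfolding the semantics, the hypothesis says: for every facet $Y'$ of $C'$ with $a \in \chi'(f(X) \cap Y')$, we have $M', Y' \models \varphi$. The goal says: for every facet $Y$ of $C$ with $a \in \chi(X \cap Y)$, we have $M, Y \models \varphi$. The crucial step is that $f$ maps any such $Y$ to a qualifying $Y' := f(Y)$: if $X$ and $Y$ share an $a$-colored vertex $v$, then $f(X)$ and $f(Y)$ share the $a$-colored vertex $f(v)$ (using color-preservation), so $a \in \chi'(f(X) \cap f(Y))$. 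Thus $f(Y)$ is one of the facets quantified over in the hypothesis, giving $M', f(Y) \models \varphi$. Then the induction hypothesis applied to the facet $Y$ and the subformula $\varphi$ yields $M, Y \models \varphi$, as required.

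The subtle point worth emphasizing is the \emph{direction} of the quantifier matching. The modality requires us to transfer a \emph{universal} statement from the target to the source, and this works precisely because $f$ sends every $a$-neighbor of $X$ to an $a$-neighbor of $f(X)$ — the morphism can only \emph{add} indistinguishabilities in the image, never remove the ones we care about. This is exactly the intuition that a morphism cannot ``gain knowledge'': the image $M'$ has at least as many indistinguishable worlds as $M$, so it knows no more, and a positive (negation-free) formula true there remains true in the more-informed source. The restriction to negation-free formulas is what makes this monotonicity usable; with an unrestricted negation in front of a $K_a$, the quantifier direction would flip and the argument would break, which is why the hypothesis on $\varphi$ is essential.
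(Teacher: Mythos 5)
Your proposal is correct and follows essentially the same route as the paper's own proof: structural induction on $\varphi$, with the base case settled by label-preservation of the morphism, and the $K_a$ case settled by observing that an $a$-colored vertex $v \in X \cap Y$ maps to an $a$-colored vertex $f(v) \in f(X) \cap f(Y)$, so the universal hypothesis at $f(X)$ applies to $f(Y)$ and the induction hypothesis transfers the result back to $Y$. The only cosmetic differences are that you add the (routine) disjunction case, while the paper instead also treats the common-knowledge operator $C_B$ via a reachability argument to justify its remark that the theorem extends to group and common knowledge.
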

\begin{proof}
We proceed by induction on $\varphi$.
First, for $p$ an atomic proposition, since morphisms preserves the valuation $\ell$, we
have $M',f(Y) \models p$ iff $M,Y \models p$. Thus the theorem is true for
(possibly negated) atomic propositions.
The case of the conjunction follows trivially from the induction hypothesis.

Suppose now that $M',f(X) \models K_a \varphi$.
In order to show $M,X \models K_a \varphi$, assume that $a \in \chi(X \cap Y)$
for some facet $Y$, and let us prove $M,Y \models \varphi$.
Let $v$ be the $a$-colored vertex in $X \cap Y$.
Then $f(v) \in f(X) \cap f(Y)$ and $\chi(f(v)) = a$.
So $a \in \chi(f(X) \cap f(Y))$ and thus $M', f(Y) \models \varphi$.
By induction hypothesis, we obtain $M,Y \models \varphi$.
Finally, suppose that $M',f(X) \models C_B \varphi$.
We want to show that $M,X \models C_B \varphi$, i.e., for every $Y$ reachable from $X$ following a sequence of simplexes sharing a $B$-colored vertex, $M,Y \models \varphi$. By the same reasoning as in the $K_a$ case, $f(Y)$ is $B$-reachable from $f(X)$, so $M',f(Y) \models \varphi$, and thus $M,Y \models \varphi$.
\end{proof}

The restriction on $\varphi$ forbids formulas saying something about what an agent
does not know. Indeed, one can ``gain'' the knowledge that some agent does not
know something; but this is not relevant information for solving the tasks that
we have in mind.
In fact, Theorem~\ref{thm:lose-knowledge} is still true if the formula $\varphi$
contains other knowledge operators such as group and common knowledge. 
For~$B$ a subgroup of agents,
group knowledge is defined as 
$E_B \varphi = \bigwedge_{b\in B} K_b \varphi$ and 
common knowledge for group~$B$ is, semantically, the least solution of the equation
$C_B \varphi = \varphi \wedge E_B(C_B \varphi)$.



\section{DEL via simplicial complexes}
\label{subsec:model}
We describe here our adaptation of Dynamic Epistemic Logic (DEL) to simplicial models, and
an  action model that is fundamental in distributed computing.

\subsection{DEL basic notions}
\label{sec:DELbasics}
DEL is the study of modal logics of model change~\cite{sep-dynamic-epistemic,DEL:2007}.
A modal logic studied in DEL is obtained
by using action models~\cite{baltagMS:98}, which are relational structures that can be used to describe a variety of informational actions.
%
An action can be thought of as an 
announcement made by the environment,
which is not necessarily public, in the sense that
not all  agents receive these announcements.
An action model describes all the possible actions that might happen, as well
as how they affect the different agents.
We first recall the usual notion of action model; then describe a dual version, appropriate to represent epistemic 
change in simplicial models.

\paragraph{Dynamic Epistemic Logic.}
An \emph{action model} is a structure $\la T,\sim,\pre \ra$,
where $T$ is a domain of \emph{action points}, such that for
each $a \in A$, $\sim_a$ is an equivalence relation on $T$, and
$\pre : T \to \Lcal_\cK$ is a  function that assigns a
\emph{precondition} $\pre(t)$ to each $t \in T$.
For an initial Kripke model $M$, the effect of action model $\cA$ is a Kripke model $M[\cA]$.
Let $M=\la S, \sim,L \ra$ be a Kripke model and $\cA = \la T,\sim,\pre \ra$
 be an action model. The
 \emph{product update model} is $M[\cA]=\la S[\cA], \sim^{[\cA]},L[\cA] \ra$, where
each world of  $S[\cA]$ is a pair $(s,t)$ with $s\in S,\; t\in T$ such that $\pre(t)$ holds in $s$. 
Then, $(s,t)\sim_a^{[\cA]} (s',t')$ whenever it holds that $s\sim_a s'$
and $t\sim_a t'$.
The valuation $L[\cA]$ at a pair $(s,t)$ is just as it was at $s$, i.e., $L[\cA]((s,t)) = L(s)$.

\begin{proposition}
\label{prop:actPrLocP}
Let $M$ be a local proper Kripke model and ${\cA=\la T, \sim, \pre\ra}$ a proper action model, then $M[\cA]$ is proper and local.
\end{proposition}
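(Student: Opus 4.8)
The plan is to prove the two properties separately, working directly from the definition of the product update model $M[\cA] = \la S[\cA], \sim^{[\cA]}, L[\cA]\ra$, where worlds are pairs $(s,t)$ with $s \in S$, $t \in T$ and $\pre(t)$ true at $s$, and where $(s,t) \sim_a^{[\cA]} (s',t')$ iff $s \sim_a s'$ and $t \sim_a t'$. I should first make explicit what ``proper'' means for an action model: by analogy with proper Kripke frames, a proper action model is one in which any two distinct action points can be distinguished by at least one agent.

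For \emph{locality}, I would take two worlds $(s,t)$ and $(s',t')$ with $(s,t) \sim_a^{[\cA]} (s',t')$, which unfolds to $s \sim_a s'$ (together with $t \sim_a t'$, which I do not need here). Since the valuation is inherited from the first component, $L[\cA]((s,t)) = L(s)$ and $L[\cA]((s',t')) = L(s')$. Because $M$ is local and $s \sim_a s'$, we have $L(s) \cap \I{AP}_a = L(s') \cap \I{AP}_a$, and hence $L[\cA]((s,t)) \cap \I{AP}_a = L[\cA]((s',t')) \cap \I{AP}_a$, which is exactly locality for $M[\cA]$. This direction is essentially immediate because the valuation ignores the action component entirely.

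For \emph{properness}, I would argue by contraposition: suppose $(s,t)$ and $(s',t')$ are two worlds that no agent distinguishes, i.e. $(s,t) \sim_a^{[\cA]} (s',t')$ for every $a \in A$. By the definition of $\sim^{[\cA]}$, this gives $s \sim_a s'$ for all $a$ and $t \sim_a t'$ for all $a$ simultaneously. Since $M$ is proper, $s \sim_a s'$ for all $a$ forces $s = s'$; since $\cA$ is proper, $t \sim_a t'$ for all $a$ forces $t = t'$. Therefore $(s,t) = (s',t')$, so distinct worlds of $M[\cA]$ are always distinguished by some agent, which is properness.

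The steps are all short, so there is no serious obstacle; the only point requiring care is pinning down the intended definition of a proper action model, since the excerpt states properness for Kripke frames but does not spell it out for action models. Once that definition is fixed (any two distinct action points are distinguished by some agent), both properties follow by directly decoding the product-update relation and valuation, with no topological or inductive machinery needed. I would present properness first (as the more structural claim) and then locality, or in whichever order matches the proposition statement.
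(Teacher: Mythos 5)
Your proof is correct and takes essentially the same approach as the paper: locality is immediate because the valuation $L[\cA]$ copies $L$ from the first component, and properness follows componentwise from the properness of $M$ and of $\cA$ (the paper argues this directly from $s \neq s'$ or $t \neq t'$, while you take the contrapositive, but the logical content is identical). Your explicit reading of what ``proper'' means for an action model --- any two distinct action points are distinguished by some agent --- is exactly the sense in which the paper uses it.
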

\begin{proof}
$M[\cA]$ is proper: let $(s,t)$ and $(s',t')$ be two distinct states of $M[\cA]$.
Then either $s \neq s'$ or $t \neq t'$, and in both cases, since $M$ and $\cA$ are proper,
at least one agent can distinguish between the two.
Now, $M[\cA]$ is local: suppose $(s,t) \sim^{[\cA]}_a (s',t')$. Then in particular
$s \sim_a s'$ and since $M$ is local, $L(s) \cap \I{AP}_a = L(s') \cap \I{AP}_a$. 
The same goes for $L[\cA]$ since it just copies $L$.
\end{proof}

\paragraph{A simplicial complex version of DEL.}
To work in the category of simplicial models, we consider a simplicial
version of action models.
First, let us define cartesian products.
Given two pure chromatic simplicial complexes $C$ and $T$ of dimension $n$,
the cartesian product $C\times T$ is the following pure chromatic simplicial complex of dimension $n$.
Its vertices are of the form $(u,v)$ with $u \in \cV(C)$ and $v \in \cV(T)$ such that $\chi(u) = \chi(v)$; the color of $(u,v)$ is $\chi((u,v)) = \chi(u) = \chi(v)$.
Its simplexes are of the form $X \times Y = \{(u_0,v_0), \ldots, (u_k,v_k)\}$ where $X = \{u_0, \ldots, u_k\} \in C$, $Y = \{v_0, \ldots, v_k\} \in T$ and $\chi(u_i) = \chi(v_i)$.


A \emph{simplicial action  model},  
$\la T, \chi, \pre\ra$ consists of a pure chromatic simplicial 
complex $\la T, \chi\ra$,
where the facets $\cF(T)$ represent communicative \emph{actions}, and $\pre$ 
assigns to each facet $X\in\cF(T)$ a precondition formula $\pre(X)$ in $\Lcal_\cK$.
Let  $M = \la C, \chi, \ell \ra$ be a simplicial model, and $\cA=\la T, \chi, \pre\ra$ be a simplicial action model.
 The \emph{product update simplicial model} $M[\cA]= \la C[\cA], \chi[\cA],\ell[\cA] \ra$ is a 
 simplicial model whose underlying simplicial complex is a sub-complex of the
 cartesian product $C \times T$, induced by all the facets of the form $X \times Y$ such that $\pre(Y)$ holds in $X$, i.e., $M,X \models \pre(Y)$. 
The valuation
$\ell : \cV(C[\cA]) \to \mathscr{P}(\I{AP})$ at a pair $(u,v)$ is just as it was at $u$: $\ell[\cA]((u,v)) = \ell(u)$.


Recall from Theorem~\ref{thm:equiv-model} the two functors $F$ and $G$ 
that define an equivalence of categories between simplicial models and Kripke models.
We have a similar correspondence between action models and simplicial action
models, which we still write $F$ and $G$. On the underlying Kripke frame and
simplicial complex they are the same as before; and the precondition of an action
point is just copied to the corresponding facet.
The simplicial version of the product update model
agrees with the usual one on Kripke models:

\begin{proposition}
\label{prop:catActMo}
Consider a simplicial model $M$ and simplicial action model $\cA$, and their 
corresponding Kripke model $F(M)$ and action model $F(\cA)$.
Then, the Kripke models $F(M[\cA])$ and $F(M)[F(\cA)]$ are isomorphic.
The same is true for $G$, starting with a Kripke model $M$ and action model $\cA$.
\end{proposition}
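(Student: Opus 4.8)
The plan is to unfold both constructions $F(M[\cA])$ and $F(M)[F(\cA)]$ explicitly and exhibit a bijection between their states that respects the indistinguishability relations and the labeling. First I would recall that, by definition, the facets of the simplicial product update $M[\cA]$ are exactly the products $X \times Y$ where $X \in \cF(C)$, $Y \in \cF(T)$, and $M,X \models \pre(Y)$. Since $F$ sends a simplicial model to the Kripke model whose states are facets, the states of $F(M[\cA])$ are precisely these facets $X \times Y$. On the other side, the states of $F(M)[F(\cA)]$ are the pairs $(X,Y)$ of facets with $F(M),X \models_\cK \pre(Y)$, where $\pre(Y)$ is just the precondition carried over to the action point $Y$ under $F$. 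The obvious candidate bijection is $X \times Y \mapsto (X,Y)$; the task is to check it is well-defined both ways and is an isomorphism of Kripke models.

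The key steps I would carry out, in order, are the following. First, verify the two sets of states coincide under this correspondence: a facet $X \times Y$ exists in $C[\cA]$ iff $M,X \models \pre(Y)$, and by Proposition~\ref{prop:truth} this is equivalent to $F(M),X \models_\cK \pre(Y)$, which is exactly the condition for $(X,Y)$ to be a state of $F(M)[F(\cA)]$. This is where Proposition~\ref{prop:truth} does the essential work of aligning the simplicial satisfaction relation used in the product update with the Kripke satisfaction relation used in the classical update. Second, check that the indistinguishability relations agree. In $F(M[\cA])$ we have $X \times Y \sim_a X' \times Y'$ iff the two facets share an $a$-colored vertex, i.e. $a \in \chi((X \times Y) \cap (X' \times Y'))$; by the definition of the cartesian product of chromatic complexes, a shared $a$-colored vertex $(u,v)$ means $u$ is a shared $a$-colored vertex of $X, X'$ and $v$ is a shared $a$-colored vertex of $Y,Y'$, which translates precisely to $X \sim_a X'$ in $F(M)$ and $Y \sim_a Y'$ in $F(\cA)$ — exactly the definition of $\sim_a^{[F(\cA)]}$. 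Third, verify the labelings match: both product updates copy the valuation from the first coordinate, so $L(X \times Y) = \ell(X) = L(X,Y)$.

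The main obstacle, and the step deserving the most care, is the second one: matching the indistinguishability relations. In the simplicial world, $\sim_a$ on facets is \emph{generated} by the ``share an $a$-vertex'' relation (see Theorem~\ref{thm:equiv}), so strictly speaking I must argue that the generated equivalence relations also coincide, not merely the generating relations. I would handle this by observing that the vertices of the product complex factor as pairs, so that the generating relation on $C[\cA]$ decomposes coordinatewise into the generating relations on $C$ and on $T$; since taking the generated equivalence relation commutes with this coordinatewise decomposition, the induced equivalence on facets of $C[\cA]$ matches $\sim_a^{[F(\cA)]}$, which is itself defined as the conjunction $s \sim_a s' \text{ and } t \sim_a t'$ of the component relations. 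Once these three checks are complete, the map $X \times Y \mapsto (X,Y)$ is a label- and color-preserving bijection commuting with all $\sim_a$, hence an isomorphism of Kripke models. The statement for $G$ then follows either by a symmetric argument or simply by appealing to the equivalence of categories from Theorem~\ref{thm:equiv-model}, which lets us transport the $F$-result back through $G$.
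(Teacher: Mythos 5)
Your overall route is sound and genuinely different from the paper's: you unfold both constructions and exhibit the explicit bijection $X \times Y \mapsto (X,Y)$, using Proposition~\ref{prop:truth} to align the two precondition checks and noting that both updates copy labels from the first coordinate, whereas the paper never looks at states at all --- it observes that both product updates are built from cartesian products in the categorical sense, and that such products are preserved by $F$ since it is half of an equivalence of categories. Your first and third steps are correct as written, and your treatment of the $G$ direction (transport through the equivalence) also works.

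The gap is in your resolution of exactly the step you flagged as delicate. The principle you invoke --- that ``taking the generated equivalence relation commutes with the coordinatewise decomposition'' --- is false in general, precisely because $C[\cA]$ is not the full product $C \times T$ but only the subcomplex induced by the precondition-satisfying facets. Concretely: witnesses for $X \sim_a X'$ in $F(M)$ and $Y \sim_a Y'$ in $F(\cA)$ are chains $X = X_0, \ldots, X_k = X'$ and $Y = Y_0, \ldots, Y_k = Y'$ (padded to equal length) whose consecutive members share $a$-colored vertices; but the diagonal pairs $X_i \times Y_i$ need not be facets of $C[\cA]$ at all, since nothing forces $M, X_i \models \pre(Y_i)$, so the two chains cannot in general be merged into a chain inside $C[\cA]$. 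What saves the statement is a fact your argument never uses: in a pure chromatic complex of dimension $n$ over $n+1$ agents, every facet contains exactly \emph{one} $a$-colored vertex. Hence the relation ``share an $a$-colored vertex'' is already reflexive, symmetric and transitive (if $X,X'$ share an $a$-vertex and $X',X''$ share an $a$-vertex, both shared vertices are the unique $a$-vertex of $X'$, hence lie in $X \cap X''$), so in $C$, $T$ and $C[\cA]$ alike the generating relation \emph{equals} the generated equivalence relation of Theorem~\ref{thm:equiv}, and no chains ever arise. With that one observation inserted in place of the commutation claim, your coordinatewise matching of the generating relations immediately gives the matching of the $\sim_a$'s, and the proof is complete.
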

\begin{proof}
The main observation is that both constructions of product update model rely on a
notion of cartesian product (in the category of pure chromatic simplicial complexes
for $M[\cA]$, and in the category of Kripke frames for $F(M)[F(\cA)]$).
These are both cartesian products in the categorical sense, therefore they are
preserved by the functor $F$ because it is part of an equivalence of category.
\end{proof}

\subsection{A basic action model for distributed computing}
\label{sec:basicAcM}

We describe here the \emph{immediate snapshot} action model $\mathcal{IS}$
for one communication
exchange among asynchronous agents. As an action model, it is new and to the best of our knowledge it has not been
studied from the DEL perspective; immediate snapshots operations 
are important  in distributed computing, and many variants of computational models based on them
have been considered, including multi-round communication exchanges, see e.g. \cite{AttiyaR2002,HerlihyKR:2013}
but for the point we want to make about using DEL, 
the main issues can be studied with this very simple action model.

The situation we have in mind is the following.
The $n+1$ agents correspond to $n+1$ concurrent processes.
Initially, each process has some input value, and they communicate (only
once) through a
shared memory array in order to try to learn each other's input value. 
They use the following protocol: each process has a dedicated
memory cell in the array, to which it writes its input value. 
Then, it reads one by one all the cells of the array,  to see which other input values have been written. Based on this information, each process
decides an output value.
The processes are  asynchronous, meaning that an execution consists of an arbitrary interleaving
of the write and read operations of all the processes (one write per process, and $n+1$ reads per process).

We could describe the action model corresponding to this situation,
and present all of our results using it, 
but to illustrate more easily the basic ideas, we define instead 
an action model $\mathcal{IS}$ corresponding to  a subset
of all the executions in the previous situation. And we do so without
loss of generality, because from the task computability perspective,
they are known to be equivalent~\cite{AttiyaR2002}.

The interleavings we consider can be represented by a sequence
of concurrency classes, $c_1,c_2,\ldots, c_m$.
For each concurrency class $c_i$, all the agents in $c_i$ execute their write
operations simultaneously, then all of them execute their read operations
simultaneously, then we move on to the next concurrency class~$c_{i+1}$.
Thus, all the agents in $c_i$ see each other's values, as well as the values of
the agents from the previous concurrency classes.


Let us define formally the simplicial action model corresponding to 
such immediate snapshot schedules.
 A \emph{sequential partition} of agents $A$ is a 
sequence $c=c_1,c_2,\ldots,c_m$, of non-empty, disjoint subsets of $A$, 
whose union is equal to~$A$. Each $c_i$ is called
a \emph{concurrency class}. Notice that $1\leq m\leq |A|$,
and when $m=1$ all agents take an immediate snapshot concurrently,
while if $m=|A|$, all take immediate snapshots sequentially.
The agents in a concurrency class $c_j$ learn
the input values of all the agents in earlier concurrency classes $c_i$ for $i \leq j$,
and which agent wrote which value.
In particular, agents in $c_m$
learn the inputs of all agents (and there is always at least one such
agent), and if $m=1$, then all agents learn all the values.
Define $\aview_a(c)$ (`A' stands for ``agent'' view) to be the set of agents whose
inputs are seen by $a$ in $c$: if $a \in c_j$, $\aview_a(c) = \bigcup_{i \leq j} c_i$.
Notice that
two executions of the immediate snapshot are indistinguishable by $a$ when the corresponding sequential partitions 
yield the same $\aview$ for $a$, and additionally, the agents in
$\aview$ have the same inputs.

Consider for instance the simplicial model of Example~\ref{ex:binInputs} where three agents  $A=\{b, g, w\}$ each have a binary input value $0$ or $1$.
Let $M = \la C, \chi, \ell \ra$ be the corresponding simplicial model, and
denote a facet $X\in \cF(C)$ by a binary sequence $b_0 b_1 b_2$,
corresponding to the three values of $b, g, w$, in that order.

In the  \emph{immediate snapshot}  simplicial action model ${\mathcal{IS}}=\la T, \chi, \pre\ra$
for three agents $A=\{b,g,w\}$,  each action in $T$
is associated with a sequential partition of~$A$.
Furthermore, there is one copy of each sequential partition $c$ for
each facet $X\in \cF(C)$ of model $M$. 
Thus, an action of $T$ is given by the data $c,b_0,b_1,b_2$,
which we write $c^{b_0 b_1 b_2}$.
The precondition of the action $c^{b_0 b_1 b_2}$ is true precisely
in the facet $b_0 b_1 b_2$ of $M$ (it is a conjunction of atomic propositions).

Consider an action $c^{b_0 b_1 b_2}$, where 
$c=c_1,\ldots,c_m$. Then $c^{b_0 b_1 b_2}$ is interpreted as follows.
If an agent $a$ is in $c_j$
then $a$ learns the values (of facet $b_0 b_1 b_2$)
of all agents in $c_i$ for $i\leq j$, and only those values.
We write $\view_a(c^{b_0 b_1 b_2})$ the vector of values
that $a$ learned, namely, the vector obtained from
$b_0 b_1 b_2$ by replacing a value $b_i$ by $\emptyset$ for
agents not in  $\aview_a(c)$.
%
 Formally, the chromatic simplicial complex $\la T, \chi \ra$,
 consists of all facets of the form:
\[
 \{
 \la b, \view_b(c^{b_0 b_1 b_2})\ra,
 \la g, \view_g(c^{b_0 b_1 b_2})\ra,
 \la w, \view_w(c^{b_0 b_1 b_2})\ra
 \}
\]

\begin{center}
\includegraphics[scale=0.4]{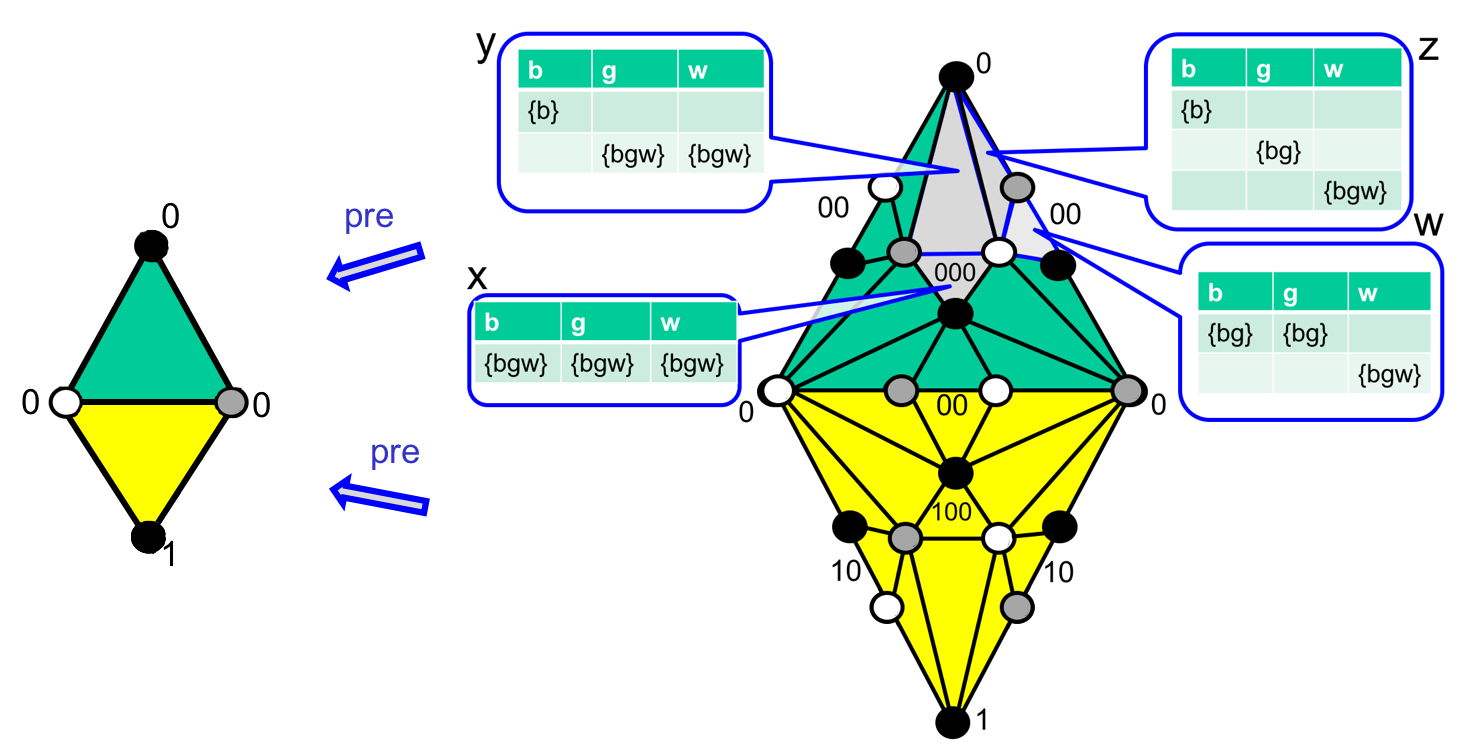}
\end{center}

The figure above illustrates (part of) the action model $\mathcal{IS}$.
It consists of the subdivisions of two triangles; the green copy above has
one triangle for each sequential partition (and
has four sequential partitions depicted). Similarly, the yellow
subdivided triangle below
repeats again all sequential partitions, but for a different precondition.
The precondition for all facets in the subdivided triangle above
is $000$, while for the facets of the subdivided triangle below it is $100$.
The subdivision  on top
has four facets identified, $X,Y,Z,W$, each one corresponding
to one of the four types of sequential partitions of ${A=\{b,w,g\}}$,
along with the corresponding $\aview$s shown in bubbles.
The colors black, grey, white of the vertices correspond respectively to agents $b$, $g$, $w$.
Notice that, for example, 
neither $b$ nor $w$ distinguish between actions $Y$ and~$Z$,
and indeed, their views are equal in $Y$ and $Z$: 
the view of $b$ consists of itself and the view of $w$
consists of the three inputs.
The numbers on the subdivided triangles indicate the $\view$s. 
 In the corners, an agent does not learn the input of any other
agent. In the boundary, two agents learn each other's inputs, and in
the center, all three learn each other's inputs.
Finally, let us look at what happens on the boundary shared by both subdivisions.
For example, the two facets in the middle of the figure
correspond to the sequential partition $\{gw\}\{b\}$;
neither $w$ nor $g$ have seen $b$, so they cannot tell whether the input of $b$ is $0$ or $1$.

\medskip
An action model is \emph{uniform} if its set of actions (facets) can be partitioned
into $k$ copies of a complex~$C$, called \emph{components},
such that all actions in $C_i$ have the same precondition,
which is true in exactly one facet $X_i$ of the simplicial model $M$.
The action model $\mathcal{IS}$ is indeed uniform, and its components
are isomorphic to a simplicial complex $C$, called the \emph{standard chromatic
subdivision},
that has been thoroughly studied. It is clear from the
figure that $C$ is a
subdivision, but for an arbitrary number of agents,
the proof is not simple~\cite{HerlihyKR:2013,kozlov2012}.  
It has been shown to have
several other topological properties, such as being collapsible~\cite{BenavidesR2018}. 
But in fact, for many applications such as consensus and set agreement,
it is sufficient to observe the following (see ch.9 of~\cite{HerlihyKR:2013}).

\begin{lemma}
\label{actioModIsSub}
Each  component of $\mathcal{IS}$ is 
a pseudomanifold with boundary.
If $M$ is a pseudomanifold with boundary, then so is $M[\mathcal{IS}]$.
\end{lemma}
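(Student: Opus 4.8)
The plan is to work with the combinatorial notion of a pseudomanifold with boundary: a pure $n$-dimensional chromatic complex $K$ is a pseudomanifold with boundary if every $(n-1)$-simplex is a face of at most two facets and $K$ is strongly connected, meaning that any two facets are linked by a chain of facets in which consecutive ones share an $(n-1)$-face; its boundary then consists of the $(n-1)$-simplices lying in exactly one facet. For the first statement I would argue directly from the description of the facets of a component as sequential partitions, which is weaker than (and avoids) the cited theorem that $\Ch$ is a genuine subdivision. An $(n-1)$-simplex of a component omits exactly one color $a$, hence is determined by the family of views $\aview_b(c)$ for $b \neq a$. These views fix the weak order that $c$ induces on $A \setminus \{a\}$, together with the set $D$ of agents that already see $a$ (those $b$ with $a \in \aview_b$). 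The only remaining freedom is the placement of $a$: it may form its own concurrency class inserted between the agents that do not see it and those in $D$, or---and only when $D \neq \emptyset$---be merged into the minimal class of $D$. Thus such an $(n-1)$-simplex is a face of exactly two facets when some agent sees $a$, and of exactly one facet (a boundary face) otherwise; in particular every $(n-1)$-simplex lies in at most two facets. Strong connectivity is a standard and easily checked property of the standard chromatic subdivision (for instance, any sequential partition is reached from the fully concurrent one $\{A\}$ by single-agent refinements, each crossing one interior $(n-1)$-face). This shows each component is a pseudomanifold with boundary.

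For the second statement, the key point is that, because $\mathcal{IS}$ is uniform, the product update glues exactly one copy $\Ch(X)$ of the standard chromatic subdivision onto each facet $X \in \cF(C)$ of $M$. Indeed, the precondition of a facet in a component is true in a unique facet $X$ of $M$, so the facets of $M[\mathcal{IS}]$ lying over $X$ are precisely the $X \times Y$ with $Y$ ranging over the facets of that component. Examining the cartesian product $C \times T$ shows that two subdivided facets $\Ch(X)$ and $\Ch(X')$ can share a vertex $(u,v)$ only when $u \in X \cap X'$, i.e.\ over a common face $\sigma = X \cap X'$ of $M$. Hence $M[\mathcal{IS}]$ is the chromatic subdivision $\Ch(M)$ of $M$: every facet is subdivided, and adjacent subdivisions are glued along the subdivision of their shared face.

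It then suffices to classify the $(n-1)$-simplices $\tau$ of $M[\mathcal{IS}]$. Either $\tau$ lies interior to a single subdivided facet $\Ch(X)$, so it is a face of exactly two facets because $\Ch(X)$ is itself a pseudomanifold with boundary; or $\tau$ lies over a shared face $\sigma = X \cap X'$ of $M$, so it bounds one facet of $\Ch(X)$ and one of $\Ch(X')$, hence two in total; or $\tau$ lies over a boundary $(n-1)$-face $\sigma$ of $M$, contained in a unique facet $X$, and then it is a face of exactly one facet of $M[\mathcal{IS}]$. Strong connectivity of $M[\mathcal{IS}]$ follows from that of $M$ together with the strong connectivity of each $\Ch(X)$, using that adjacent subdivided facets meet along the full $(n-1)$-dimensional subdivision of $\sigma$. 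This yields the pseudomanifold-with-boundary property for $M[\mathcal{IS}]$, whose boundary is the subdivision of $\partial M$.

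The main obstacle is the middle case: one must verify that the subdivision of a shared face $\sigma$ induced from the $\Ch(X)$ side coincides with the one induced from the $\Ch(X')$ side, so that the two copies glue cleanly rather than overlapping or leaving a gap. This is exactly where the locality of immediate-snapshot views enters: the simplices of a component lying over $\sigma$ are those in which every agent colored in $\sigma$ has completed without seeing the omitted agent, so their views mention only the input values on $\sigma$, which $X$ and $X'$ share since $\sigma = X \cap X'$. Consequently both sides induce the same standard chromatic subdivision of $\sigma$. Establishing this coherence---equivalently, that restricting the standard chromatic subdivision of a simplex to a face yields the standard chromatic subdivision of that face---is the crux; once it is in hand, the three-case analysis and the connectivity argument are routine.
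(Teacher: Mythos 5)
Your proposal is correct in outline, but it takes a genuinely different route from the paper, which gives no self-contained argument at all: the paper notes that each component of $\mathcal{IS}$ is isomorphic to the standard chromatic subdivision and defers the proof to the literature (Attiya and Rajsbaum; Herlihy, Kozlov and Rajsbaum), where the lemma follows from the theorem that $\Ch$ is a genuine subdivision of the simplex --- a theorem the paper itself flags as ``not simple'' for arbitrarily many agents. You instead verify the pseudomanifold conditions directly from the sequential-partition description. Your count of the facets containing an $(n-1)$-simplex is exactly right: writing $D$ for the set of agents that see the omitted agent $a$, there are two facets when $D\neq\emptyset$ ($a$ as a singleton class inserted just before the earliest class of $D$, or $a$ merged into that class) and one facet, on the boundary, when $D=\emptyset$. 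Likewise the three-case classification of the $(n-1)$-simplices of $M[\mathcal{IS}]$, together with the identification of the crux --- that the subdivision induced on a shared face $\sigma=X\cap X'$ from either side is the same copy of $\Ch(\sigma)$, because views lying over $\sigma$ mention only values carried by $\sigma$ --- is sound and strictly more elementary. What the paper's route buys is the stronger topological conclusion (a genuine subdivision, hence preservation of all topological invariants); what yours buys is self-containedness, since the lemma asserts only the combinatorial pseudomanifold property.

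Two points need repair or explicit mention. First, your parenthetical justification of strong connectivity of a component is false as stated: single-agent refinements starting from the fully concurrent partition $\{A\}$ can never produce a partition whose first class has two or more agents (e.g.\ $\{a,b\}\{c\}$ is unreachable by splits alone), because every split creates a leading singleton and splits never merge. Strong connectivity does hold, but the walk must use both of the moves identified by your own adjacency analysis --- splitting an agent off its class (placing it immediately before the remainder) and merging a leading singleton into the class that follows it. For instance, induct on the number of agents: split the agents of $c_1$ off one by one, recursively realize $c_2,\dots,c_m$ on the remaining block, then re-merge the leading singletons to form $c_1$. Second, your middle case tacitly requires that two adjacent facets $X,X'$ of $M$ carry \emph{different} values at their non-shared vertex; otherwise a view that mentions the omitted agent would be common to both components, and an interior $(n-1)$-simplex of $M[\mathcal{IS}]$ would then lie in four facets, destroying the pseudomanifold property. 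This is exactly what the paper's uniformity hypothesis (each precondition true in \emph{exactly one} facet of $M$) guarantees, so your appeal to uniformity is the right move, but it must be invoked at this step as well, and not only to establish the ``one copy of $\Ch$ per facet'' structure.
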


For a detailed proof see~\cite{AttiyaR2002}. 
To complete the example, 
notice that the effect of applying the action model $\mathcal{IS}$ to the model $M$ of Example~\ref{ex:binInputs}, which consists of a triangulated sphere,
is to subdivide each of the triangles
in the sphere.
Remarkably, the topology of the initial simplicial complex is preserved.


In the $\mathcal{IS}$ model, each agent executes a single
immediate snapshot.
Iterating this model gives rise to the \emph{iterated immediate
snapshot model}~$\mathcal{IS}^r$~\cite{AttiyaR2002,iterated2010},
where each agent executes $r$ consecutive immediate snapshots.
Each component is a chromatic subdivision, where every triangle is
subdivided $r$ times.


\section{A DEL semantics for distributed task computability}
\label{sec:modeKripkeF}

\subsection{Tasks}
\label{sec:tasksolvability}
Consider the situation where a set of agents $A$
starts in an initial global state, defined by values given to each agent.
The values are local, in the sense that each agent knows its own initial value,
but not necessarily the values given to other agents. 
The agents communicate to each other their initial values,
via the immediate snapshot action model $\mathcal{IS}$ of Section~\ref{sec:basicAcM}.
Then, based on the information each agent has after communication,
the agent produces an output value. A task  specifies
 the output values that the agents may decide, when starting
in a given input state. Tasks have been studied since
early on in distributed computability~\cite{BiranMZ90}.
Here we provide, for the first time,  a DEL semantics for tasks.

Consider a simplicial model $\Gz = \la I, \chi, \ell \ra$ called the \emph{initial simplicial model}.
Each facet of $I$, with its labeling $\ell$, represents
a possible initial configuration.
Let us fix $\Gz$ to be the binary inputs 
model of Example~\ref{ex:binInputs}, to illustrate the ideas, and because it 
appears frequently in distributed computing. 
  
   A \emph{task} for $\Gz$ is a simplicial action model 
   ${\mathcal{T}}=\la T, \chi, \pre\ra$
for agents $A$, where each facet  
 is of the form $X=\{ \la b,d_b \ra, \la g,d_g \ra, \la w,d_w \ra\}$,
 where the values $d_b,d_g,d_w$ are taken from an arbitrary
 domain of \emph{output values}.
 Each such $X$ has 
a precondition that is true in one or more facets of $\Gz$,
interpreted as ``if the input configuration is a facet in which $\pre(X)$ holds,
and every agent $a \in A$ decides the value $d_a$, then this is a valid execution''.

The most important task in distributed computing is
\emph{binary consensus}, where the agents must agree on a value $0$ or $1$, such
that at least one agent had the agreed value as input.
Thus, $T$ has only two facets, $X_0$
where all decisions are $0$ and $X_1$, where all decisions are $1$.
$\pre(X_0)$ is true in all facets of $\Gz$, except for the one
where all agents start with input $1$. Similarly,
$\pre(X_1)$ is true in all facets of $\Gz$, except for the one
where all agents start with input $0$. 
The following generalization of consensus has been well studied in distributed
computability \cite{HerlihyKR:2013}.
In the \emph{$k$-set agreement} task, agents start with inputs from a set of at least $k+1$
values, and have to decide on at most $k$ different inputs.

\subsection{Semantics of task solvability}
Given the  {simplicial input model}  $\Gz$
and a communication model $\cA$ such as $\mathcal{IS}$,
we get the \emph{simplicial protocol model} $\Gz[\cA]$,
that represents the knowledge gained by the agents 
after  executing~$\cA$.
To solve a task ${\mathcal{T}}$, each agent, based on its own
knowledge, should produce an output value, such that the
vector of output values corresponds to a facet of~${\mathcal{T}}$,
respecting the preconditions of the task.

The  following gives a formal epistemic logic semantics to task solvability,
Recall that
a {morphism} $\delta$ of simplicial models  is a chromatic simplicial map that preserves the labeling: $\ell'(f(v)) = \ell(v)$.
Also recall that the product update model $\Gz[\cA]$ is a sub-complex of the cartesian product $\Gz \times \cA$, whose vertices are of the form $(i,ac)$ with $i$ a vertex of $\Gz$ and $ac$ a vertex of $\cA$. We write $\pi_\Gz$ for the first projection on $\Gz$, which is a morphism of simplicial models.

\begin{definition}
\label{thm:Kripketasksolv2}
 A task ${\mathcal{T}}$ is \emph{solvable} in $\mathcal{A}$ 
  if there exists a morphism $\delta: \Gz[\cA]\rightarrow \Gz[{\mathcal{T}}]$ such that
$\pi_I\, \circ\, \delta=\pi_I$, i.e., the diagram of simplicial complexes below commutes.
\end{definition}

\begin{wrapfigure}[9]{r}{-0.4\textwidth}
 \centering
\begin{tikzpicture}
  \node (s) {$\Gz[\cA]$};
  \node (xy) [below=2 of s] {$\Gz[\mathcal{T}]$};
  \node (x) [left=of xy] {$\Gz$};
  \draw[<-] (x) to node [sloped, above] {$\pi_\Gz$} (s);
  \draw[->, dashed, right] (s) to node {$\delta$} (xy);
  \draw[->] (xy) to node [below] {$\pi_\Gz$} (x);
\end{tikzpicture}\end{wrapfigure}
The justification for this definition is the following.
%
A facet $X$  in $\Gz[\cA]$ corresponds to a pair $(i,ac)$,
where ${i \in \cF(\Gz)}$ represents input value assignments to all agents, 
and $ac\in\cF(\cA)$ represents an action,
codifying the communication exchanges that took place.
The morphism~$\delta$ takes~$X$ to a facet $\delta(X) = (i,dec)$ of $\Gz[\cT]$,
where $dec \in \cF(\cT)$ is the set of decision values that the agents will
choose in the situation $X$.
Moreover, $\pre(dec)$ holds in $i$, meaning that $dec$ corresponds to valid decision values for input $i$.
The commutativity of the diagram 
expresses the fact that both $X$ and $\delta(X)$ correspond to the same input
assignment $i$.
Now consider a single vertex $v \in X$ with $\chi(v) = a \in A$. Then, agent $a$
decides its value solely according to its knowledge in $\Gz[\cA]$: if another
facet $X'$ contains $v$, then $\delta(v) \in \delta(X) \cap \delta(X')$, meaning
that $a$ has to decide the same value in both situations. 



The diagram above  has two illuminating interpretations.
First, by Theorem~\ref{thm:lose-knowledge}, we know that the knowledge about the world of each  agent
can only decrease (or stay constant) along the   $\delta$ arrow.
 So agents  should improve knowledge  through 
communication, by going from $\cI$ to $\cI[\cA]$. 
The task is solvable if and only if there is enough knowledge
 in $\cI[\cA]$ to match the knowledge required by $\cI[\cT]$.
Secondly, the possibility of solving a task depends on the existence of
a certain simplicial map from the complex of $\cI[\cA]$ to the complex of
$\cI[\cT]$. Recall that a simplicial map is the discrete equivalent of
a continuous map, and hence task solvability is of a topological nature.
This leads us to the  connection with distributed computability described in this extended
abstract; 
further details are in~\cite{ericSergioDEL1-2017,ericSergioDEL2-2017}.

\subsection{Applications}

Here we describe how to use our DEL setting
to analyze solvability in the immediate-snapshot model of three well-studied 
distributed computing tasks:
consensus, set agreement, and approximate agreement. Their solvability is already
well-understood; our aim here is to understand the epistemic logic content of the known topological arguments that are used to prove unsolvability.

\paragraph{Consensus}
Let  $\Gz=\la I,\chi,\ell\ra$ be the initial simplicial model for binary input values, 
and  ${\mathcal{T}}=\la T, \chi, \pre\ra$ be the action model for binary
consensus. 
Thus, $T$ has only two facets, $X_0$
where all decisions are $0$ and $X_1$, where all decisions are~$1$.
The underlying complex of $\Gz[\cT]$ consists of two disjoint simplicial complexes:
$I_0\times X_0$ and   $I_1\times X_1$,
where $I_0$ consists of all input facets with at least one $0$,
and
 $I_1$ consists of all input facets with at least one $1$.
 Notice that, in fact, each of the two complexes
 $I_i\times X_i$, for $i\in\{0,1\}$, is isomorphic to $I_i$, since~$X_i$  consists of just one facet.

To show that binary consensus cannot be solved by the immediate snapshot protocol, we must prove that the map $\delta : \Gz[\cA] \to \Gz[\cT]$ of Definition~\ref{thm:Kripketasksolv2} does not exist.
The usual proof of impossibility uses a topological obstruction to the existence of $\delta$.
Here, instead, we exhibit a logical obstruction. 

\begin{theorem}
\label{th:consImp}
The binary consensus task
is not solvable by $\mathcal{IS}$.
\end{theorem}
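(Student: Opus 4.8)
The plan is to assume a solver $\delta$ exists and derive a contradiction by exhibiting a positive epistemic formula that the protocol model $\Gz[\mathcal{IS}]$ can never make common knowledge, whereas the task model $\Gz[\cT]$ does. By Definition~\ref{thm:Kripketasksolv2}, solving consensus amounts to a morphism $\delta : \Gz[\mathcal{IS}] \to \Gz[\cT]$ with $\pi_\Gz \circ \delta = \pi_\Gz$. Recall that $\Gz[\cT] = (I_0 \times X_0) \sqcup (I_1 \times X_1)$ is disconnected, where $I_0$ is the subcomplex of input facets containing at least one $0$ (all facets except the all-$1$ facet $111$) and $I_1$ the subcomplex of input facets containing at least one $1$ (all facets except $000$). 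In particular, the unique facet of $\Gz[\cT]$ lying over the input $000$ sits only in the component $I_0 \times X_0$, since $000 \notin I_1$.

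First I would record the two topological ingredients. By Lemma~\ref{actioModIsSub} together with the fact that $\Gz$ is a triangulated sphere, $\Gz[\mathcal{IS}]$ is obtained by applying the standard chromatic subdivision to each facet of $\Gz$, and is therefore connected; in particular it contains a facet over every input, including $111$. Moreover, since $A = \{b,g,w\}$ is the full set of agents, $A$-reachability coincides with connectivity of the underlying complex, so for any positive formula $\psi$, the truth of $C_A\psi$ at a facet is equivalent to $\psi$ holding at every facet of that facet's connected component.

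The logical obstruction is then the formula $\varphi_0 = C_A(p_{b,0} \lor p_{g,0} \lor p_{w,0})$, read as \emph{``it is common knowledge that at least one agent has input $0$.''} This is a positive formula: a common-knowledge operator applied to a disjunction of atomic propositions, with no negations at all. Theorem~\ref{thm:lose-knowledge}, in its extension to common knowledge, therefore applies to it (the disjunction case being handled exactly as the conjunction case). Let $X$ be the facet of $\Gz[\mathcal{IS}]$ lying over input $000$. By commutativity, $\delta(X)$ lies over $000$, hence $\delta(X) \in I_0 \times X_0$. Every facet of this component lies over an input with at least one $0$, so $p_{b,0} \lor p_{g,0} \lor p_{w,0}$ holds throughout $I_0 \times X_0$; by the reachability observation above, $\Gz[\cT], \delta(X) \models \varphi_0$.

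Finally, applying Theorem~\ref{thm:lose-knowledge} to $\delta$, from $\Gz[\cT], \delta(X) \models \varphi_0$ I conclude $\Gz[\mathcal{IS}], X \models \varphi_0$. But $\Gz[\mathcal{IS}]$ is connected and contains a facet over $111$ at which $p_{b,0} \lor p_{g,0} \lor p_{w,0}$ fails, so $\varphi_0$ is false at \emph{every} facet of $\Gz[\mathcal{IS}]$ --- contradicting $\Gz[\mathcal{IS}], X \models \varphi_0$. Hence no such $\delta$ exists and consensus is unsolvable by $\mathcal{IS}$. The one genuinely nontrivial step is the connectivity of $\Gz[\mathcal{IS}]$, which is exactly the topological content of the classical impossibility and which here is supplied by the subdivision property of $\mathcal{IS}$; the rest is a syntactic check that $\varphi_0$ lies in the positive fragment and the bookkeeping of which component of $\Gz[\cT]$ the image $\delta(X)$ is forced into.
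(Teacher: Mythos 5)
Your proof is correct and takes essentially the same route as the paper's: it exhibits common knowledge of ``at least one input is $0$'' in the component of $\Gz[\cT]$ containing $\delta(X)$, refutes that common knowledge in $\Gz[\mathcal{IS}]$ using its connectivity (supplied by Lemma~\ref{actioModIsSub}), and derives the contradiction via the knowledge-gain Theorem~\ref{thm:lose-knowledge}. The only cosmetic differences are that you anchor the argument at the specific facet over input $000$ and invoke commutativity of the diagram, whereas the paper argues from an arbitrary facet and both formulas $\varphi_0,\varphi_1$; nothing essential changes.
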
 

\begin{proof}
We first state some required knowledge at $\Gz[\cT ]$ to solve the task.
Let $\varphi_i$ be a formula denoting that at least one agent has input $i$.
We claim  that, for $i\in\{0,1\}$,
at any facet $Y$ of $I_i\times X_i$,
there is common knowledge  that at least one input is $i$,
$\Gz[\cT] ,Y \models C_A \varphi_i$.

Now, consider the simplicial model $\Gz[\mathcal{IS}]$,
for the immediate snapshot action model.
By Lemma~\ref{actioModIsSub} 
the underlying complex of $\Gz[\mathcal{IS}]$ is strongly connected,
and hence there is a path from the facet with valuation indicating
that all inputs are $0$ to the facet where all inputs are $1$.
Namely, we claim that
at any facet $X$ of $\Gz[\mathcal{IS}]$, it is not the case
that 
$\Gz[\mathcal{IS}],X \models C_A \varphi_i$, for both $i=0$~and~$i=1$.

Finally, we know that morphisms of simplicial models cannot ``gain knowledge about the world'' from
Theorem~\ref{thm:lose-knowledge}, and hence, there cannot be a morphism
$\delta$ from $\Gz[\mathcal{IS}]$ to $\Gz[\cT ]$, by the two previous
claims. 
\end{proof}
 
Two observations. First, notice that the proof argument holds
for any other model, instead of $\mathcal{IS}$, which is
connected. This is the case for any number
of communication rounds by wait-free asynchronous agents~\cite{Herlihy:waitFree1988},
and even if only one may crash in a message passing system~\cite{FischerLP85}, which are the classic consensus impossibility results.
Secondly, the usual topological argument for impossibility is the following: because simplicial maps preserve 
connectivity, $\delta$ cannot send a connected
simplicial complex into a disconnected simplicial complex.
Notice how in both the logical and the topological proofs, the main ingredient is a connectedness argument.

\paragraph{Set agreement}
Let  $\Gz=\la I,\chi,\ell\ra$ be the initial simplicial model for $A=\{b,w,g\}$,
and three possible input values, $\{0,1,2\}$.
Let  ${\mathcal{T}}=\la T, \chi, \pre\ra$ be the action model for $2$-set agreement, requiring that each agent decides on one of the input
values, and at most $2$ different values are decided.
Thus, $T$ has  facets $X_{d_0,d_1,d_2}$, for each vector $d_0,d_1,d_2$,
such that $d_i \in \{0,1,2\}$, $|\{d_0, d_1, d_2\}| \leq 2$,
and $\pre(X_{d_0,d_1,d_2}) = \varphi_{d_0} \land \varphi_{d_1} \land \varphi_{d_2}$,
where $\varphi_i$ is as above.

\begin{theorem}
\label{th:saImp}
The $2$-set agreement task
is not solvable by $\mathcal{IS}$.
\end{theorem}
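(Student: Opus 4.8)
The plan is to adapt the classical topological impossibility argument for $2$-set agreement into the present epistemic framework, replacing the one-dimensional connectivity argument used for consensus by a Sperner-type argument in dimension $2$. First I would restrict attention to the single input facet $\sigma \in \cF(I)$ on which the three agents hold the three distinct values, say $b \mapsto 0$, $g \mapsto 1$, $w \mapsto 2$. Since $\mathcal{IS}$ is uniform with components isomorphic to the standard chromatic subdivision, the part of $\Gz[\mathcal{IS}]$ lying over $\sigma$ is a subdivided triangle, which by Lemma~\ref{actioModIsSub} is a pseudomanifold with boundary (a triangulated $2$-disk whose boundary is the subdivided boundary of $\sigma$). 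Suppose for contradiction that a solving morphism $\delta$ exists. Because $\pi_\Gz \circ \delta = \pi_\Gz$, it maps this subdivided triangle into the part of $\Gz[\cT]$ lying over $\sigma$, hence assigns to each subdivision vertex $v$ of color $a$ a decision value $\delta(v) \in \{0,1,2\}$.

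The crucial step is to derive the Sperner boundary conditions from the preconditions together with Theorem~\ref{thm:lose-knowledge}. The value $\delta(v)$ must satisfy $\varphi_{\delta(v)}$ in \emph{every} input facet of $\Gz$ over which $v$ occurs in the protocol complex, i.e.\ in every input consistent with $a$'s view at $v$; equivalently, $\Gz[\mathcal{IS}],\,v \models K_a \varphi_{\delta(v)}$. But $K_a \varphi_d$ holds at $v$ exactly when $a$ has already seen the value $d$: if $a$ has not seen $d$, there is an indistinguishable input in which no agent holds $d$, so $\varphi_d$ fails there. Thus an agent may only decide a value it has observed. At the three corners of $\sigma$ each agent has seen only its own input, forcing the decisions $0,1,2$; on the boundary edge opposite the white corner only the values $0,1$ have been observed by the participating (black and grey) agents, and symmetrically for the other two edges. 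These are precisely the hypotheses of Sperner's lemma.

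I would then apply Sperner's lemma to the triangulated disk: there must exist a facet of the subdivision whose three vertices carry the three distinct decisions $0,1,2$. Under $\delta$ this facet is sent to a decision facet $X_{d_0,d_1,d_2}$ with $|\{d_0,d_1,d_2\}| = 3$, which is \emph{not} a facet of $\cT$, since $2$-set agreement forbids three distinct outputs. This contradicts the requirement that $\delta(X)$ be a facet of $\Gz[\cT]$, so no solving morphism $\delta$ can exist.

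The main obstacle, and the step deserving the most care, is the justification of the ``decide only a value you have seen'' constraint: one must check that a single subdivision vertex $v$ genuinely occurs in protocol facets lying over several distinct input facets of $\Gz$ (namely those agreeing with $a$'s partial view), and that commutativity of the diagram together with $\pre(X_{d_0,d_1,d_2}) = \varphi_{d_0}\wedge\varphi_{d_1}\wedge\varphi_{d_2}$ forces $\delta(v)$ to be valid in all of them at once. This is exactly where Theorem~\ref{thm:lose-knowledge} enters, turning the combinatorial boundary conditions into the epistemic statement $K_a\varphi_{\delta(v)}$ and thereby exhibiting the logical rather than merely topological obstruction, in the same spirit as the consensus case.
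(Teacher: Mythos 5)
Your proof is correct, but it takes a genuinely different route from the paper's. The paper's (sketched) argument places the topological obstruction in the task complex: it describes $\Gz[\cT]$ as a triangle-shaped complex that must have a hole in its middle --- using Sperner's lemma only to rule out a filled center --- and then argues that no suitable $\delta$ can exist because $\Gz[\mathcal{IS}]$ is $2$-connected and simplicial maps preserve $2$-connectivity. You instead localize to the single panchromatic input facet $\sigma$ and apply Sperner's lemma directly to the chromatic subdivision sitting over $\sigma$ in the protocol complex, with decision values as labels; the boundary conditions (``decide only a value you have seen'') are extracted from the commutativity $\pi_\Gz \circ \delta = \pi_\Gz$ together with the preconditions $\pre(X_{d_0,d_1,d_2}) = \varphi_{d_0}\wedge\varphi_{d_1}\wedge\varphi_{d_2}$, and your epistemic rendering of this constraint as $K_a\varphi_{\delta(v)}$ via Theorem~\ref{thm:lose-knowledge} is sound (the precondition structure of $\cT$ makes $K_a\varphi_{d}$ valid at every facet of $\Gz[\cT]$ where $a$ decides $d$, and knowledge cannot be gained along $\delta$). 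Your route buys a more self-contained and arguably more rigorous argument: it avoids the delicate step of inferring nonexistence of $\delta$ from a connectivity mismatch between domain and codomain, and it makes the validity constraint epistemically explicit. The paper's route, in exchange, exhibits the invariant ($2$-connectivity, or the hole in $\Gz[\cT]$) that fits its general theme of topological invariants preserved by action models. Two caveats: first, your appeal to Lemma~\ref{actioModIsSub} is not quite what Sperner's lemma needs --- being a pseudomanifold with boundary does not by itself give the subdivision structure on the boundary of $\sigma$; you should instead invoke the fact, stated in Section~\ref{sec:basicAcM}, that each component of $\mathcal{IS}$ is the \emph{standard chromatic subdivision} of the input simplex. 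Second, note that despite the epistemic phrasing of the boundary conditions, your contradiction still passes through Sperner's lemma, so this does not settle what the paper explicitly leaves open, namely a purely \emph{logical} obstruction (a single formula true in $\Gz[\cT]$ but false in $\Gz[\mathcal{IS}]$) in the style of the consensus proof of Theorem~\ref{th:consImp}.
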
 

\begin{proof}[Proof (Sketch).]
The usual topological argument~\cite{HerlihyKR:2013} roughly goes as follows.
%
%
%
We can visualize the complex $\Gz[\cT]$
as having the structure of a triangle with a hole in the middle.
The three "corners" of the triangle, indexed by $i\in\{ 0,1,2\}$, consist
of just one facet of the form  ${I}_i\times X_{i}$,
where ${I}_i$ is the input facet with only one input, $i$,
and in $X_{i}$ all decisions are~$i$.
The three "edges" of the triangle are of the form ${I}_{ij}\times X_{ij}$,
where $I_{ij}$ consists of all facets with inputs in $\{ i,j\}$,
and $X_{ij}$ the facets whose decisions are in $\{i,j\}$.
Notice that ${I}_{ij}\times X_{ij}$ contains ${I}_{i}\times X_i$ and $I_j \times X_j$.
This triangle must have a hole in the middle: otherwise, by Sperner's lemma (see
e.g. \cite{kozlov:2007,HerlihyKR:2013}), 
there would be a facet with three distinct
decision values.
Thus, $\Gz[\cT]$ is not $2$-connected. But since $\Gz[\mathcal{IS}]$ is $2$-connected,
and simplicial maps preserve $2$-connectivity, there cannot exist a suitable
$\delta$.
\end{proof}

Finding a logical obstruction to the existence of $\delta$ is an open question.
This would amount to finding a formula $\varphi$ which is true
$\Gz[\cT]$, but false $\Gz[\mathcal{IS}]$, and applying
Theorem~\ref{thm:lose-knowledge}. Doing so, we would understand better what
knowledge is necessary to solve set agreement, which is not available in
$\Gz[\mathcal{IS}]$.

\paragraph{Approximate agreement}
We discuss now the approximate agreement
task, where agents have to decide values which are $1/N$ apart
from each other.
Its solvability depends on the number
of immediate snapshot communication rounds $r$ that the
agents perform. We did not describe in detail
the action model $\mathcal{IS}^r$, 
so we only briefly mention that the task is solvable
if and only if the number of rounds is large enough, with
respect to $N$.
Very roughly, there is a center facet $X^c$ in the
product update of the task, where
$\Gz[\cT],X^c \models E^k \phi_c$, where  $k$
is roughly $N/2$, 
for a formula $\phi_c$ representing the input values in $X^c$.
On the other hand, 
there is no facet $X$ in $\Gz[\mathcal{IS}^r]$ where this
knowledge exists, unless $r$ is large enough.
A detailed proof is in the technical report~\cite{ericSergioDEL1-2017}.

\section{Conclusions}
\label{sec:conclusions}
We have made a first step into defining a version of fault-tolerant multi-agent DEL
using simplicial complexes as models, providing a different perspective from
the classical knowledge approach based on Kripke frames.
Also, we have defined problem specifications based on DEL
using simplical complexes, instead of based on 
formula specifications.
We have thus established a bridge between 
the theory of distributed computability and epistemic logic.
We illustrated the setting with a simple one-round communication action
model $\mathcal{IS}$, that corresponds
to a well-studied situation in distributed computing, but many
other models can be treated similarly.

Many interesting questions are left for future work.
We have developed here all our theory on pure simplicial complexes,
where all the facets are of the same dimension.
Extending it to complexes with lower dimensional facets would allow
us to model detectable failures.
In two preliminary reports we give additional details,
and explore further some of these issues~\cite{ericSergioDEL1-2017,ericSergioDEL2-2017}.
It is known to be undecidable whether a task is solvable in the immediate
snapshot model, even for three processes~\cite{GKdec1999,HRdec1997},
and hence the connection we establish with DEL implies
that it is undecidable if certain knowledge has been gained in multi-round
immediate snapshot action models, but further work is needed to
study this issue.
Future work is needed to study bisimulations and their relation to
the simulations 
studied in task computability~\cite{HerlihyR12}. 
It would be of interest to study other distributed computing settings, especially those which have stronger
communication objects available, and which are known to yield complexes
that might not preserve the topology of the input complex.


\comment{
\paragraph*{Acknowledgements.}
We thank Carlos Velarde and David Rosenblueth for their involvement
in the early stages of this research, and their help in developing the dual of a Kripke graph.
Part of this work was done while Sergio Rajsbaum was at Ecole Polytechnique.
This work was partially supported by PAPIIT-UNAM IN109917.
}


\paragraph{Acknowledgments}
This work has been partially supported by UNAM-PAPIIT IN109917
and France-Mexico ECOS 207560(M12-M01).
\bibliographystyle{eptcs}
\bibliography{logic}


\end{document}